\newtheorem{theorem}{Theorem}
\newtheorem{lemma}{Lemma}
\newtheorem{proposition}{Proposition}
\newtheorem{definition}{Definition}
\newcommand{\bc}{\mathbf{c}}
\newcommand{\be}{\mathbf{e}}
\newcommand{\bh}{\mathbf{h}}
\newcommand{\bv}{\mathbf{v}}
\newcommand{\bI}{\mathbf{I}}
\newcommand{\bx}{\mathbf{x}}
\newcommand{\bX}{\mathbf{X}}
\newcommand{\bp}{\mathbf{p}}
\newcommand{\br}{\mathbf{r}}
\newcommand{\bA}{\mathbf{A}}
\newcommand{\bB}{\mathbf{B}}
\newcommand{\bH}{\mathbf{H}}
\newcommand{\bM}{\mathbf{M}}
\newcommand{\bQ}{\mathbf{Q}}
\newcommand{\T}{\mathsf{T}}
\begin{document}

\title{An Enhanced SDR based Global Algorithm for Nonconvex Complex Quadratic Programs with Signal Processing Applications\thanks{The work of C. Lu was  supported by the National Natural Science Foundation of China (NSFC) under Grant 11701177 and  Grant 11771243. The work of Y.-F. Liu was supported in part by the NSFC under Grant 11991021,  Grant 11688101,  Grant 11671419, and Grant 11631013. The work of J. Zhou was supported by the NSFC under Grant 11701512.
		Part of this work \cite{luefficienticcc} has been presented (as an invited paper) at the IEEE International Conference on Communications in China (ICCC), Qingdao, China, October 22--24, 2017. (\emph{Corresponding author: Ya-Feng Liu.})}}

\author{ \IEEEauthorblockN{Cheng Lu, Ya-Feng Liu, and Jing Zhou}% <-this % stops a space
\thanks{C. Lu is with the School of Economics and Management, North China Electric Power University, Beijing 102206, China
(e-mail: lucheng1983@163.com). Y.-F. Liu is with the State Key Laboratory
of Scientific and Engineering Computing, Institute of Computational
Mathematics and Scientific/Engineering Computing, Academy of
Mathematics and Systems Science, Chinese Academy of Sciences,
Beijing 100190, China (e-mail: yafliu@lsec.cc.ac.cn). J. Zhou is with the
Department of Applied Mathematics, College of Science, Zhejiang University of Technology, Hangzhou
310023, China (e-mail: zhoujing@zjut.edu.cn).}% <-this % stops a space
}

%\markboth{Journal of \LaTeX\ Class Files,~Vol.~14, No.~8, August~2016}%
%{Shell \MakeLowercase{\textit{et al.}}: Bare Demo of IEEEtran.cls for IEEE Journals}

\maketitle

% As a general rule, do not put math, special symbols or citations
% in the abstract or keywords.
\begin{abstract}
In this paper, we consider a class of nonconvex \emph{complex} quadratic programming (CQP) problems, which find a broad spectrum of signal processing applications.
By using the polar coordinate representations of the complex variables, we first derive a new enhanced semidefinite relaxation (SDR) for problem (CQP). %which is tighter than the conventional SDP relaxation.
Based on the newly derived SDR, we further propose an efficient branch-and-bound algorithm for solving problem (CQP).
Key features of our proposed algorithm are: (1) it is guaranteed to find the global solution of the problem (within any given error tolerance); (2) it
is computationally efficient because it carefully utilizes the special structure of the problem. %and the complex variables in it. %(e.g., using their polar coordinate representations).
We apply our proposed algorithm to solve the multi-input multi-output (MIMO) detection problem, the unimodular radar code design problem, and the virtual beamforming design problem. Simulation results show that our proposed enhanced SDR, when applied to the above problems, is generally much tighter than the conventional SDR and our proposed global algorithm can efficiently solve these problems. In particular, our proposed algorithm significantly outperforms the state-of-the-art sphere decode algorithm for solving the MIMO detection problem in the hard cases (where the number of inputs and outputs is equal or the signal-to-noise-ratio is low) and a state-of-the-art general-purpose global optimization solver called Baron for solving the virtual beamforming design problem.
\end{abstract}
% Note that keywords are not normally used for peerreview papers.
\begin{IEEEkeywords} Branch-and-bound algorithm, enhanced SDR, MIMO detection, nonconvex CQP, virtual beamforming.
%Quadratic optimization, semidefinite relaxation, . %semidefinite relaxation,
\end{IEEEkeywords}

\IEEEpeerreviewmaketitle

\section{Introduction}\label{sec:introduction}

In this paper, we consider the following nonconvex complex quadratic programming problem:
\begin{align*}
\min_{\bx \in \mathbb{C}^{n}} ~&~ F(\bx):=\frac{1}{2} \bx^{\dagger}\bQ\bx+\mathrm{Re}\left(\bc^{\dagger}\bx\right) \\ \tag{CQP}
\mbox{s.t.~} ~&~ \ell_i \leq |x_i|\leq u_i,~i=1,2,\ldots,n,\\
&~\arg\left(x_i\right)\in \mathcal{A}_i,~i=1,2,\ldots,n,
\end{align*}
where
\begin{itemize}
  \item [-] $\bx=[x_1,x_2,\ldots,x_n]^\T \in \mathbb{C}^{n}$ is the $n$-dimensional complex (unknown) variable;
  \item [-] $\bQ \in \mathbb{C}^{n\times n}$ is a Hermitian matrix, $\bc \in \mathbb{C}^{n}$ is a complex vector, $u_i$ and $\ell_i$~($i=1,2,\ldots,n$), satisfying $u_i\geq \ell_i\geq 0,$ are $2n$ real numbers, and $\mathcal{A}_i~(i=1,2,\ldots,n)$ are $n$ discrete/continuous sets; and
  \item [-] $\mathrm{Re}(\cdot),~|\cdot|,$ and $\arg\left(\cdot\right)$ denote the real part, the magnitude, and the argument of a complex number, and $(\cdot)^{\T}~\text{and}~(\cdot)^\dagger$ denote the transpose and Hermitian transpose of a (complex) vector.%, respectively.
  \end{itemize}

%\subsection{Motivating applications}
Problem (CQP) finds many important signal processing applications, including
multi-input multi-output (MIMO) detection \cite{Jalden,Ma2004}, unimodular radar code design \cite{Maio2009,Maio2011,Soltanalian}, virtual beamforming \cite{Hong}, phase recovery \cite{Waldspurger}, {sensor bias estimation \cite{pu2018optimal}}, and angular synchronization \cite{Bandeira}; {see Section \ref{sec:application} further ahead.}
%, and phase synchronization \cite{Boumal2016} [[[\textbf{Are angular synchronization and phase synchronization different? If they are similar, let us combine them together.}]]].
For more applications of problem (CQP) in signal processing and communications, please refer to
\cite{Luo2010} and \cite{Palomar} and references therein. In addition, problem (CQP) has also attracted much attention in the mathematical programming community \cite{So2008,Zhang2006,Lu2018,Goemans1995,Goemans2004,JarreJoGO}. For instance, some well-known combinatorial optimization problems, including the max-cut problem \cite{Goemans1995} and the max-3-cut problem \cite{Goemans2004}, and the so-called unit-modulus constrained QP \cite{Lu2018} can all be recast into the form of problem (CQP). %Problem (CQP) also arises in power system applications \cite{Low2014}.

It is known that problem (CQP) is NP-hard in general \cite{Zhang2006}. %(except some special cases \cite{liu11tspbeamforming,yu2007transmitter}).
Hence, there is no polynomial time algorithm which can solve it to global optimality (unless P=NP).
{{Most of existing algorithms for solving problem (CQP) are approximation algorithms, local optimization algorithms, or other heuristics}} (e.g., \cite{Ma2004,Maio2009,Maio2011,Waldspurger,Bandeira,Luo2010,So2008,Zhang2006,Gershman,He,Luo2007}).
{{These algorithms generally cannot guarantee to find the global solutions of problem (CQP), except only for some special cases \cite{pu2018optimal,Bandeira,Ge2015,lu2019tightness}.}}
%However, in general cases, the global optimality of the solutions obtained by these algorithms is not guaranteed.
A straightforward way of globally solving problem (CQP) is to first reformulate the problem as
an equivalent real QP by representing the complex variables by their real and imaginary components and then apply the
existing general-purpose global algorithms (e.g., algorithms proposed in \cite{Linderoth,Tawarmalani}) for solving the equivalent real reformulation.
However, an issue of doing so is the computational efficiency (see our numerical results in Section \ref{sec:numericalbaron}), since it does not utilize the
special structure of the problem with the complex variables. %in it.

%To the best of our knowledge, there is no global algorithm that is specially designed and can efficiently solve problem
%(CQP) by utilizing the special structure of the complex variables. The goal of this paper is to fill this gap, i.e., propose
%an efficient global algorithm for solving problem (CQP). More specifically, we propose an efficient branch-and-bound
%algorithm for globally solving problem (CQP). Our proposed algorithm directly deals with the complex problem by using
%the polar coordinate representations of the complex variables, and branches on the ranges of variables in polar coordinate
%representations. One key feature of working on the polar coordinate representations, compared to working on the real
%and imaginary representations, is that a much tighter SDP relaxation can be obtained. Our preliminary numerical results
%show that our proposed algorithm is able to efficiently solve problem (CQP).
To the best of our knowledge, there is no global algorithm that is specially designed and can efficiently solve problem (CQP) by utilizing the special structure of the problem.
%and the complex variables in it.
The goal of this paper is to fill this gap, i.e., propose an efficient\footnote{{The term ``efficient" in this paper means that the corresponding algorithm is computationally efficient, which does not imply that the algorithm has a polynomial time complexity.}} global algorithm for solving problem (CQP). The main contributions of this paper are twofold.
\begin{itemize}
  \item \emph{A New and Enhanced SDR for Problem (CQP)}. We first give an equivalent reformulation of problem (CQP) by using the polar coordinate representations of the complex variables. The equivalent reformulation reveals the
  intrinsical nonconvexity of problem (CQP). Then, we derive the \emph{convex envelope}\footnote{For a given set, its convex envelope is defined as the {smallest} convex set that contains it.} of these nonconvex constraints in the reformulation and use their convex envelopes to
  replace the original nonconvex constraints, which thus lead to a new SDR for problem (CQP). The new SDR for problem (CQP) is generally (much) tighter than the conventional SDR, which directly drops the nonconvex constraints in the equivalent reformulation. It is worth mentioning that the new enhanced SDR for problem (CQP) is computationally very efficient as all newly added constraints as compared with the conventional SDR are linear constraints. %, because the convex envelope constraints are computationally
%  It is worth mentioning that: (1) the new enhanced SDR for problem (CQP) is computationally very efficient as all newly added constraints as compared with the conventional SDR are linear constraints; (2) the new enhanced SDR for problem (CQP) is a nontrivial extension of the SDRs in \cite{Lu2018,lu2019tightness} and the proposed algorithm in the next section is a nontrivial extension of the algorithms in \cite{Lu2018,lu2019tightness}

  \item \emph{An Efficient Global Algorithm for Problem (CQP)}. Based on the newly derived SDR, we propose an efficient branch-and-bound algorithm for problem (CQP) that is guaranteed to find its global solution (within any given error tolerance). The newly derived SDR plays a very crucial role in the proposed branch-and-bound algorithm, because the algorithm needs to solve an SDR at each iteration and the optimal values of all solved SDRs will provide a lower bound for problem (CQP). We emphasize here that the efficiency of a branch-and-bound algorithm considerably
relies on the quality of the lower bound. To the best of our knowledge, our proposed branch-and-bound algorithm is the first tailored algorithm for globally solving problem (CQP).
\end{itemize}

We apply our proposed branch-and-bound algorithm to solve three important signal processing problems, i.e., the MIMO detection problem  \cite{Jalden,Ma2004}, the unimodular radar code design problem \cite{Maio2009,Maio2011,Soltanalian}, and the virtual beamforming design problem \cite{Hong}. Simulation results show that our proposed new SDR, when applied to these problems, is indeed generally much tighter than the conventional SDR. Moreover, simulation results show that our proposed global algorithm is highly efficient and outperforms the state-of-the-art algorithm/solver for solving these problems. More specifically, our proposed algorithm can solve the MIMO detection problem in case of $8$-PSK and with the number of inputs and outputs $n~\text{and}~m$ being $20$ and signal-to-noise-ratio (SNR) being $5$ dB within {$140$} seconds (on average) while the state-of-the-art sphere decode algorithm \cite{Chan,Damen} needs {$1836$} seconds; our proposed algorithm can solve the virtual beamforming design problem with $m=10$ and $n=5$ within $0.09$ seconds while the state-of-the-art general-purpose global optimization solver Baron \cite{Tawarmalani} needs more than $80$ seconds. %[[[\textbf{Please provide two references here}]]]
%  in the hard case (where the number of inputs and outputs is equal) for solving the MIMO detection problem and  for solving the virtual beamforming design problem.

%: 1) We design an improved semidefinite relaxation that is much tighter than conventional relaxations; 2) We propose an efficient branch-and-bound algorithm for globally solving problem (CQP). The key feature of our methods is to represent the variables in problem (CQP) in their polar coordinate forms, such that the special structure of problem (CQP) can be exploited to derive effective algorithms. Our numerical experiments will demonstrate the effectiveness of the proposed algorithms.

%The rest of this paper is organized as follows. \rev{In Section \ref{sec:application}, we list three signal processing applications of problem (CQP).} In Section \ref{sec:sdr},
%we develop a new and enhanced SDR for problem (CQP). In Section \ref{sec:bbalgorithm}, we propose a tailored branch-and-bound algorithm, based on the enhanced SDR, for solving problem (CQP). Numerical results are presented in Section \ref{sec:numerical} and conclusions are drawn in Section \ref{sec:conclusion}.
%the polar coordinate representations of the complex variables.

%review the conventional SDR for problem (CQP) and

We adopt the following notations throughout the paper. We use lowercase
boldface and uppercase boldface letters to denote (column)
vectors and matrices, respectively. %{\color{red}{(except the special symbol $\textbf{i}$, which denotes  the imaginary unit)}}.
For a given complex vector $\bx\in \mathbb{C}^n$, $\left\|\bx\right\|_2$ denotes its Euclidean norm and
$\mathrm{Re}(\bx)$ and $\mathrm{Im}(\bx)$ denote its component-wise real and imaginary part, respectively.% and  denotes  component-wise  part.
%{For a given complex number $c,$ the notations $\mathrm{R}(c)$ and $\mathrm{I}(c)$ stand for its real and imaginary parts, respectively. %For a given (complex) vector $\bx$, $\|\bx\|$ denotes its Euclidean norm.
~For a given complex Hermitian matrix $\bA$, $\bA\succeq \mathbf{0}$ means $\bA$ is positive semidefinite. For two given Hermitian matrices $\bA$ and $\bB$, $\bA\succeq \bB$ means $\bA-\bB\succeq \mathbf{0}.$ Moreover, let $\textrm{Trace}(\cdot)$ denote the trace operator, let $\bA\bullet \bB$ denote $\textrm{Trace}(\bA^{\dagger}\bB)$ (i.e., $\sum_{i}\sum_{j}A_{ij}B_{ij},$ where $A_{ij}$ denotes the $(i,j)$-th entry of matrix $\bA$), and let $\left\|\bA\right\|_{\mathrm{F}}$ denote $\sqrt{\bA\bullet \bA}.$ Finally, we use $\nu^*$ to denote the optimal value of problem (CQP).
%we use $\mathbf{i}$ to denote the imaginary unit which satisfies the equation $\mathbf{i}^2 = -1.$

%Throughout the paper, we use the following notations: For a vector $\bx\in \mathbb{C}^n$, $\|\bx\|=\sqrt{\sum_{i=1}^n |x_i|^2}$ denotes its Euclidean-norm,
%$\mathrm{Re}(\bx)$ denotes the component-wise real part, and $\mathrm{Im}(\bx)$ denotes the component-wise imaginary part.
%For a complex Hermitian matrix $\bA$, $\bA\succeq 0$ means $\bA$
%is positive semidefinite, and $\textrm{Trace}(\bA)$ denotes the trace of $\bA$. For two Hermitian matrices $\bA$
%and $\bB$, $\bA\succeq \bB$ means $\bA-\bB\succeq 0$. Denote $\bA\cdot \bB=\textrm{Trace}\left(\bA^{\dag}\bB \right)$, and the matrix norm
%is given by $\|\bA\|=\sqrt{\bA\cdot \bA}$.

{\section{Three Signal Processing Applications of Problem (CQP)}\label{sec:application}}
%To show the importance of problem (CQP),
{In this section}, we list three important signal processing applications of problem (CQP) and we will focus on these three applications throughout the paper.

$\blacksquare$ \textbf{MIMO detection} \cite{Jalden,Ma2004}. The input-output relationship of the MIMO channel can be modeled as $\br=\bH\bx+\bv,$
where $\bH\in\mathbb{C}^{m\times n}$ is the complex channel matrix (for $n$ inputs and $m$ outputs with $m\geq n$), $\bv\in\mathbb{C}^{m}$ is the additive white Gaussian noise, $\br \in\mathbb{C}^{m}$ is the vector of received signals, and $\bx\in\mathbb{C}^{n}$ is the vector of transmitted symbols.
Assume the $M$-Phase-Shift Keying (PSK) modulation scheme with $M\geq 2$ is adopted. Then, each entry $x_i$ of $\bx$ belongs to a finite set of symbols, i.e., {$x_i\in \left\{\exp\left({\textsf{i}\theta}\right)\mid\theta\in \mathcal{A}\right\},$ where $\textsf{i}$ is the imaginary unit satisfying $\textsf{i}^2=-1$} and
$$\mathcal{A}=\left\{\theta\mid\theta=2k\pi/M,~k=0,1,\ldots,M-1\right\}.$$
The maximum likelihood MIMO detection problem is %formulated as
\begin{align}
\min_{\bx \in \mathbb{C}^{n}} ~&~ \frac{1}{2}\left\|\bH\bx-\br\right\|_2^2 \label{MIMO-Problem}\\ \nonumber%tag{ML}
\mbox{s.t.~} ~&~ \left|x_i\right|=1,~ \arg\left(x_i\right)\in \mathcal{A},~i=1,2,\ldots,n.
\end{align}
%%
%\begin{equation}
%\begin{array}{rl}
%\displaystyle \min_{\mathbf{x} \in \mathbb{C}^{n}} &\left\|\mathbf{H}\mathbf{x}-\mathbf{r}\right\|_2^2 \\[-2pt]
%\mbox{s.t.~} & x_i\in \mathcal{S}_M,~i=1,2,\ldots,n.
%\end{array}\end{equation}
%
%$\mathcal{S}_M:=\left\{\exp(\text{i}\theta)|\theta=2k\pi/M,k=0,1,\ldots,M-1\right\}$
%
%$\min_{\mathbf{x} \in \mathbb{C}^{n}}\left\|\mathbf{H}\mathbf{x}-\mathbf{r}\right\|_2^2,s.t.x_i\in \mathcal{S}_M,i=1,2,\ldots,n,$
%$\mathbf{z}$ $\mathbf{H}\in\mathbb{C}^{m\times n}$
%
%%%\begin{align}
%\min_{\bx} ~&~ \frac{1}{2}\left\|\bH\bx-\br\right\|_2^2 \label{MIMO-Problem}\\ \nonumber%tag{ML}
%\mbox{s.t.~} ~&~ \left|x_i\right|=1,~ \arg\left(x_i\right)\in \mathcal{A},~i=1,\ldots,n.
%\end{align}
It is clear that problem \eqref{MIMO-Problem} is a special case of problem (CQP) with
$\bQ=\bH^{\dagger} \bH,~\bc=-\bH^\dagger \br,~\ell_i=u_i=1,~\mathcal{A}_i=\mathcal{A},~ i=1,2,\ldots,n.$ In this case, all of $\mathcal{A}_i$ in problem (CQP) are discrete sets.

$\blacksquare$ \textbf{Unimodular radar code design} \cite{Maio2009,Maio2011,Soltanalian}. %[[[\rev{\textbf{Cheng: it would be great if we can give more physical interpretation of this problem as in the other two examples/applications.}}]]]
The goal of the unimodular radar code design problem is to maximize the system's detection performance under the similarity constraint (for controlling the ambiguity distortion).
It has been shown (e.g., in \cite{Maio2009}) that the system's detection performance depends on the radar code, the disturbance
covariance matrix, and the temporal steering vector only through the SNR. Mathematically, the SNR of the considered radar system can be expressed as $$c \cdot\bx^\dag \left(\bM^{-1}\odot \left(\bp\bp^\dag \right)^*\right) \bx,$$ where $c$ is a constant (depending only on the cases of the nonfluctuaing and fluctuating target), $\bx\in\mathbb{C}^n$ is the unimodular radar code to be designed, $\bM$ is the positive definite covariance matrix of some unknown zero-mean complex Gaussian noise vector, $\bp=\left[1,\,e^{\textsf{i}2\pi f_dT_r},\ldots,e^{\textsf{i}2\pi (n-1) f_dT_r}\right]^\T$ is the temporal steering vector with $f_d$ being the target Doppler frequency, $T_r$ being the pulse repetition time, and $N$ being the length of the radar code. In the above, $\odot$ denotes the Hadamard product operator, $(\cdot)^{-1}$ denotes the (matrix) inverse operator, and $(\cdot)^*$ denotes the element-wise conjugate operator. The unimodular radar code design problem can be formulated as %follows (see \cite{Maio2009}):
\begin{equation}\label{Radar-Problem}\begin{array}{cl}
\displaystyle \max_{\bx\in\mathbb{C}^n} & \bx^{\dag}\left(\bM^{-1}\odot \left(\bp\bp^\dag \right)^*\right)\bx \\[3pt]
\mbox{s.t.} &\displaystyle |x_i|=1,~i=1,2,\ldots,n, \\[5pt]
            &\displaystyle \|\bx-\bx^0\|_{\infty}\leq \delta,
\end{array}\end{equation}
where $\bx^0 \in \left\{\bx\in\mathbb{C}^n\mid|x_i|=1,~i=1,2,\ldots,n\right\}$ is a predefined desired radar code (e.g., the Barker code) and $\delta>0$ is a given similarity tolerance. When the tolerance $\delta$ is small enough (e.g., $\delta<\sqrt{2}$), the constraint $\|\bx-\bx^0\|_{\infty}\leq \delta$ is equivalent to
$\arg \left(x_i\right) \in [\underline{\theta}_i,\bar{\theta}_i]$ for all $i=1,2,\ldots,n$, where $$\underline{\theta}_i=\arg \left(x^0_i\right)-\arccos(1-\delta^2/2),$$ $$\bar{\theta}_i=\arg \left(x^0_i\right)+\arccos(1-\delta^2/2).$$ Therefore, the above unimodular radar code design
problem \eqref{Radar-Problem} is also a special case of problem (CQP) with $\bQ=-2\bM^{-1}\odot \left(\bp\bp^\dag \right)^*,~\bc=\bm{0},~l_i=u_i=1,$ and $\mathcal{A}_i = [\underline{\theta}_i,\bar{\theta}_i]$ for $i=1,2,\ldots,n$. In this case, all of $\mathcal{A}_i$ in problem (CQP) are continuous sets.

%$\blacksquare$ \textbf{Unit-modulus constrained quadratic programming \cite{Lu2018}.}
%A special case of problem (CQP) is the following unit-modulus constrained quadratic programming problem:
%\begin{align*}
%\min_{\bx} ~&~ \frac{1}{2} x^{\dag}Qx+\mathrm{Re}(c^{\dag}x) \\ \tag{UQP}
%\mbox{s.t.} ~~ &|x_i|^2=1,~i=1,...,n,\\
%&\arg\left(x_i\right)\in \mathcal{A}_i,~i=1,...,n.
%\end{align*}
%In fact, the MIMO channel detection problem \eqref{MIMO-Problem} and the radar code design problem  \eqref{Radar-Problem} are all special cases of problem (UQP).
%Besides of the above two applications, problem (UQP) has special interest in some phase angle related signal processing problems, including the
%phase recovery problem \cite{Waldspurger} and the phase synchronization problem \cite{Boumal2016}.
%Moreover, the classical max-cut problem \cite{Goemans1995} and max-3-cut problem \cite{Goemans2004}
%can all be formulated as (UQP).

$\blacksquare$ \textbf{Virtual {beamforming} design \cite{Hong}.}
Suppose that there is a set $\{1,2,\ldots,n\}$ of transmitters each equipped with a single antenna and there is a single receiver equipped with $m$ receive antennas. Suppose that all $n$ transmitters can fully cooperate with each other and let $\bx\in\mathbb{C}^n$ be the virtual transmit beamforming vector formed by all transmitters. Let $\bh_j\in \mathbb{C}^n$ be
the channel vector between all transmitters and the $j$-th antenna of the receiver. %, and let $x$ be the beamforming vector.
The virtual beamforming design problem in this single-hop wireless network is to maximize the total received signal power subject to
individual transmit power constraints. %$|x_i|\leq P$.
Mathematically, the problem can be formulated as
\begin{align}
\max_{\bx\in\mathbb{C}^n} ~&~ \sum_{j=1}^m\left|\bh_j^\dagger \bx\right|^2 \label{VB}\\\nonumber%\tag{VB}
\mbox{s.t.~~} ~&~ |x_i|\leq \sqrt{P_i},~i=1,2,\ldots,n,
\end{align}
%\begin{align}
%\max_{\bx} ~&~ \sum_{j=1}^m\left|\bh_j^\dagger \bx\right|^2 \label{VB}\\\nonumber%\tag{VB}
%\mbox{s.t.~} ~&~ |x_i|\leq \sqrt{P_i},~i=1,2,\ldots,n,
%\end{align}
where $P_i$ is the power budget of transmitter $i$.
Again, problem \eqref{VB} is a special case of problem (CQP) with
$\bQ=-2 \sum_{j=1}^m \bh_j \bh_j^\dagger,~\bc=\bm{0},~\ell_i=0,~u_i=\sqrt{P_i},~\mathcal{A}_i=[0,2\pi],~i=1,2,\ldots,n.$

\section{An Enhanced SDR for Problem (CQP)}\label{sec:sdr}
In this section, we first review the conventional SDR and then develop a new enhanced SDR for problem (CQP).

\subsection{Conventional SDR}
By introducing an $n\times n$ complex matrix $\bX=\bx\bx^\dag$, problem (CQP) can be equivalently reformulated as
\begin{align*}
\min_{\bx,\bX} ~&~ \frac{1}{2} \bQ\bullet \bX+\mathrm{Re}\left(\bc^{\dagger}\bx\right) \\ \tag{P}
\mbox{s.t.~} ~&~ \ell^2_i \leq X_{ii}\leq u^2_i,~i=1,2,\ldots,n,\\
&~\arg\left(x_i\right)\in \mathcal{A}_i,~i=1,2,\ldots,n,\\
&~ \bX=\bx\bx^\dagger,
\end{align*}
%\begin{align*}
%\min_{\bx,\bX} ~&~ \frac{1}{2} \bQ\cdot \bX+\mathrm{Re}\left(\bc^{\dag}\bx \right) \\ \tag{P}
%\mbox{s.t.} ~&~ \ell^2_i \leq X_{ii}\leq u^2_i,~i=1,2,\ldots,n,\\
%&~\arg \left(x_i\right)\in \mathcal{A}_i,~i=1,2,\ldots,n,\\
%&~ \bX=\bx\bx^\dag,
%\end{align*}
where $X_{ii}$ is the $i$-th diagonal entry of $\bX.$ The conventional SDR
of problem (P) is
\begin{align*}
\min_{\bx,\bX} ~&~ \frac{1}{2} \bQ\bullet \bX+\mathrm{Re}\left(\bc^{\dagger}\bx\right) \\ \tag{CSDR}
\mbox{s.t.~} ~&~ \ell^2_i \leq X_{ii}\leq u^2_i,~i=1,2,\ldots,n,\\
&~ \bX\succeq \bx\bx^\dagger,
\end{align*}
%\begin{align*}
%\min_{\bx,\bX} ~&~ \frac{1}{2} \bQ\cdot \bX+\mathrm{Re}\left(\bc^{\dag}\bx \right) \\ \tag{CSDR}
%\mbox{s.t.} ~&~ \ell^2_i \leq X_{ii}\leq u^2_i,~i=1,2,\ldots,n,\\
%&~ \bX\succeq \bx\bx^\dag,
%\end{align*}
which  relaxes $\bX=\bx\bx^\dagger$ to $\bX\succeq \bx\bx^\dagger$ and drops the argument constraints
$\arg\left(x_i\right)\in \mathcal{A}_i$ for all $i=1,2,\ldots,n.$

The above conventional relaxation (CSDR) has been widely applied for solving problems arising from signal processing and other applications.
Based on (CSDR), various (approximation) algorithms have been proposed. Indeed, all the (approximation) algorithms proposed in \cite{Ma2004,Maio2009,Maio2011,Waldspurger,Bandeira,Luo2010,So2008,Zhang2006,Gershman,He,Luo2007}
are based on (CSDR) (or its equivalent reformulations). %[[[\textbf{what do we mean by ``equivalent forms" here?}]]]

{{Problem (CSDR) can be solved in polynomial time by using the interior-point algorithm (for any given positive error tolerance) \cite[Sec. 6.6.3]{BenTal}}}, and its optimal value serves as a lower bound of problem (P). If the optimal solution $(\bar{\bx},\bar{\bX})$ of problem (CSDR)
is of rank one, i.e., satisfying $\bar\bX=\bar\bx\bar\bx^\dagger,$ then $(\bar\bx,\bar\bX)$ is the global solution of problem (P) without the argument constraints. However, $\bar\bX$ might not be of rank one and $\bar\bx$ might not satisfy the argument constraints. In these cases, there is a nonzero gap between problem (P) and its relaxation (CSDR){, where the gap between two problems in this paper refers to the absolute value of the difference between the optimal values of the two problems.}

% {\color{red}{(i.e., the absolute value of the difference between the optimal values of the two problems)}}.

\subsection{An Enhanced SDR}

In this subsection, we derive more valid inequalities to reduce the gap between problems (P) and (CSDR) and develop an enhanced SDR for problem (P).

%To identify valid inequalities, we study the difference between the feasible set of (P)
%\begin{equation}\label{eq1}
%\left\{\left(\bx,\bX\right)|~\bX = \bx\bx^\dag, l_i \leq |x_i| \leq u_i,\arg \left(x_i\right)\in \mathcal{A}_i,i=1,2,\ldots,n\right\}
%\end{equation}
%and the feasible set of (CSDR)
%\begin{equation}\label{eq2}
%\left\{(\bx,\bX)|~\bX\succeq \bx\bx^\dag, \ell^2_i \leq X_{ii}\leq u^2_i,~i=1,2,\ldots,n\right\}.
%\end{equation}
Notice that the nonconvex equality constraint $\bX=\bx\bx^\dag$ in problem (P) can be equivalently reformulated as
\begin{equation}\label{eq3}
\bX\succeq \bx \bx^\dag~\textrm{and}~X_{ii}=|x_{i}|^2,~i=1,2,\ldots,n.
\end{equation}
In fact, if $(\bx,\bX)$ satisfies \eqref{eq3}, then $\bX-\bx\bx^\dag$ is a positive semidefinite matrix
with all diagonal entries being zero, and thus $\bX=\bx\bx^\dag$.
Now, we introduce the polar coordinate representation $x_i=r_i e^{\textsf{i}\theta_i}$ of the complex variable $x_i$ for all $i=1,2,\ldots,n.$ By the equivalence of $\bX=\bx\bx^\dag$ and \eqref{eq3}, we get the following proposition.

\begin{proposition}
The feasible set of problem (P) can be equivalently expressed as follows: $\bX\succeq \bx\bx^\dagger$ and
$$\ell_i\leq r_{i}\leq u_i,~X_{ii}=r_i^2,~x_i=r_ie^{\textsf{i}\theta_i},~\theta_i\in\mathcal{A}_i,~i=1,2,\ldots,n.$$
\end{proposition}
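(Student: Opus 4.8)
The plan is to establish the claimed equivalence by showing that the two descriptions of the feasible set describe exactly the same pairs $(\bx,\bX)$, after accounting for the auxiliary variables $r_i$ and $\theta_i$. Recall that problem (P) has feasible set given by the constraints $\ell_i^2 \leq X_{ii} \leq u_i^2$, $\arg(x_i) \in \mathcal{A}_i$, and $\bX = \bx\bx^\dagger$ (for $i=1,2,\ldots,n$). I would first invoke the already-established equivalence that $\bX = \bx\bx^\dagger$ is the same as requiring $\bX \succeq \bx\bx^\dagger$ together with $X_{ii} = |x_i|^2$ for all $i$ (this is exactly \eqref{eq3}, whose justification is given right before the proposition). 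So the feasible set of (P) equals the set of $(\bx,\bX)$ with $\bX \succeq \bx\bx^\dagger$, $\ell_i^2 \leq X_{ii} \leq u_i^2$, $X_{ii} = |x_i|^2$, and $\arg(x_i) \in \mathcal{A}_i$.

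Next I would introduce, for each $i$, the polar representation: write $r_i := |x_i| \geq 0$ and let $\theta_i := \arg(x_i)$ (with the convention that $\theta_i$ can be chosen as any admissible argument value, which matters only when $x_i = 0$, in which case any $\theta_i$ works and the constraint $\theta_i \in \mathcal{A}_i$ is vacuously satisfiable provided $\mathcal{A}_i \neq \emptyset$; I would note $\ell_i = 0$ is needed for $x_i = 0$ to be feasible anyway). Then $x_i = r_i e^{\textsf{i}\theta_i}$ by definition of polar coordinates. The condition $X_{ii} = |x_i|^2$ becomes $X_{ii} = r_i^2$; the condition $\ell_i^2 \leq X_{ii} \leq u_i^2$ together with $X_{ii} = r_i^2$ and $r_i \geq 0$ is equivalent to $\ell_i \leq r_i \leq u_i$ (here I use $u_i \geq \ell_i \geq 0$, so taking square roots is order-preserving); and $\arg(x_i) \in \mathcal{A}_i$ becomes $\theta_i \in \mathcal{A}_i$. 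This shows one inclusion: any feasible $(\bx,\bX)$ for (P) admits $r_i, \theta_i$ satisfying the stated system.

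For the reverse inclusion, I would start from $(\bx,\bX)$, $r_i$, $\theta_i$ satisfying $\bX \succeq \bx\bx^\dagger$, $\ell_i \leq r_i \leq u_i$, $X_{ii} = r_i^2$, $x_i = r_i e^{\textsf{i}\theta_i}$, $\theta_i \in \mathcal{A}_i$, and read the implications backwards: from $x_i = r_i e^{\textsf{i}\theta_i}$ with $r_i \geq 0$ we get $|x_i| = r_i$, hence $X_{ii} = r_i^2 = |x_i|^2$ and $\ell_i^2 \leq X_{ii} \leq u_i^2$; from $\theta_i \in \mathcal{A}_i$ and $x_i = r_i e^{\textsf{i}\theta_i}$ we get $\arg(x_i) \in \mathcal{A}_i$ (again with the caveat about $x_i = 0$); and combining $\bX \succeq \bx\bx^\dagger$ with $X_{ii} = |x_i|^2$ recovers $\bX = \bx\bx^\dagger$ via \eqref{eq3}. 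Projecting out $r_i$ and $\theta_i$ then gives a feasible point of (P). The main obstacle — really the only subtlety — is the degenerate case $x_i = 0$ (possible only when $\ell_i = 0$), where the argument is not uniquely defined; I would handle this by the convention above, remarking that since we only need \emph{existence} of admissible $\theta_i$, and $\mathcal{A}_i$ is nonempty, the equivalence is unaffected. Everything else is a routine unwinding of definitions, so I would present it compactly rather than belaboring each step.
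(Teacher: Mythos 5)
Your proposal is correct and follows essentially the same route as the paper's proof: both directions reduce to the equivalence in \eqref{eq3} between $\bX=\bx\bx^\dagger$ and the pair of conditions $\bX\succeq\bx\bx^\dagger$, $X_{ii}=|x_i|^2$, combined with setting $r_i=|x_i|$ and $\theta_i=\arg(x_i)$ in the converse direction. Your extra care about the degenerate case $x_i=0$ (where the argument is not uniquely defined) is a minor refinement the paper omits, but it does not change the argument.
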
 %[[[\textbf{Any reasons to write $\ell_i\leq r_{i}\leq u_i$? Maybe we can add both $\ell^2_i \leq X_{ii}\leq u^2_i$ and $\ell_i\leq r_{i}\leq u_i$ in the new relaxation, because after the relaxation they generally are not equivalent, right?}]]]
\begin{proof} We first show that, if $(\bx,\bX,\br)$ and $\left\{\theta_i\right\}$ satisfy all conditions in the proposition, then $(\bx,\bX)$ is feasible to problem (P).
 First, the assumption immediately implies that $(\bx,\bX)$ satisfies the first two constraints in problem (P). It remains to show $\bX=\bx\bx^\dag.$ This can be immediately obtained by combining $\bX\succeq \bx\bx^\dagger$ and $X_{ii}=r_i^2=\left|x_{i}\right|^2,~i=1,2,\ldots,n.$ %we immediately get
%imply that $\bx$ is feasible to (CQP), and
%imply that  Then, $(\bx,\bX)$ is feasible to (P).
For the converse direction, assume that $(\bx,\bX)$ is a feasible solution of (P). Let $r_i=|x_i|$ and $\theta_i=\arg \left(x_i\right)$ for all $i=1,2,\ldots,n.$ Then, it is simple to check
$(\bx,\bX,\br)$ jointly with $\left\{\theta_i\right\}$ satisfy all conditions in the proposition. %The proof is complete.
\end{proof}

The constraints $x_i=r_ie^{\textsf{i}\theta_i},~\theta_i\in\mathcal{A}_i$ and  $X_{ii}=r_i^2$ in Proposition 1 are still not convex, but they allow for simple convex relaxations. Below we derive convex envelopes of these two types of nonconvex constraints, which lead to a new tighter SDR for problem (P).

Let us first consider the nonconvex set
\begin{equation}\label{nonconvexsetAi}{\mathcal{S}_{\mathcal{A}_i}:=\left\{(x_i ,r_i)\,|\,x_i =r_i e^{\textsf{i}\theta_i},~\theta_i\in \mathcal{A}_i,~r_i\geq0\right\}}\end{equation}
and let $\mathcal{G}_{\mathcal{A}_i}$ be its convex envelope. By using the similar arguments in \cite{Lu2018,lu2017efficient}, one can show that: (1) if $\mathcal{A}_i=[\underline {\theta}_i, \bar\theta_i]$ with $\bar\theta_i-\underline {\theta}_i \leq \pi,$ then
\begin{equation}\label{GAi1}\small{\mathcal{G}_{\mathcal{A}_i}=\left\{(x_i,r_i)\mid\left|x_i\right|\leq r_i,\,\alpha_{i} \mathrm{Re}\left(x_i\right)+\beta_{i} \mathrm{Im}\left(x_i\right)\geq \gamma_{i}r_i\right\}},\end{equation}
where
%\begin{equation}\label{alpha}
%\small{\alpha_{i}=\cos\left(\frac{\underline {\theta}_i+\bar \theta_i}{2}\right), \beta_{i}=\sin\left(\frac{\underline {\theta}_i+\bar \theta_i}{2}\right),
%\gamma_{i}=\cos\left(\frac{\underline {\theta}_i-\bar \theta_i}{2}\right)};
%\end{equation}
\begin{equation}\label{alpha}
\begin{array}{rl}
&\alpha_{i}=\cos\left(\frac{\underline {\theta}_i+\bar \theta_i}{2}\right), \beta_{i}=\sin\left(\frac{\underline {\theta}_i+\bar \theta_i}{2}\right),\\[5pt]
&\gamma_{i}=\cos\left(\frac{\underline {\theta}_i-\bar \theta_i}{2}\right);
\end{array}
\end{equation}
%(2) if $\mathcal{A}_i=[\underline {\theta}_i,\bar\theta_i]$ with $\bar\theta_i-\underline {\theta}_i = \pi,$ then %$\mathcal{G}_{[\ell_i,u_i]}$ has the form
%\begin{equation}\label{GAi2}\mathcal{G}_{\mathcal{A}_i}=\left\{(x_i,r_i)\mid\left|x_i\right|\leq r_i,\,\alpha_i \mathrm{Re}\left(x_i\right)+\beta_i \mathrm{Im}\left(x_i\right)\geq 0 \right\},\end{equation} where $\alpha_{i}$ and $\beta_{i}$ are given in \eqref{alpha};
%$c_i=\cos\left(\frac{\underline {\theta}_i+\bar\theta_i}{2}\right)~\text{and}~d_i=\sin\left(\frac{\underline {\theta}_i+\bar\theta_i}{2}\right);$
(2) if $\mathcal{A}_i=\{\theta_{i}^1,\theta_{i}^2,\ldots,\theta_{i}^M\}$ is a discrete set (with a finite number of elements) with $0\leq \theta_{i}^1 < \theta_{i}^2 < \cdots<\theta_{i}^M<2\pi$ and $\theta_{i}^{j+1}-\theta_{i}^{j} \leq \pi$ for all $i=1,2,\ldots, M-1,$
then $\mathcal{G}_{\mathcal{A}_i}$ is a polyhedral set and %, whose extreme directions include
%$\left\{\left(e^{\textbf{i}\theta_1},1\right),\left(e^{\textbf{i}\theta_2},1\right),\ldots,\left(e^{\textbf{i}\theta_M},1\right)\right\}$.
%Then the polyhedral set $\mathcal{G}_{\mathcal{A}_i}$ can be represented by
\begin{equation}
\mathcal{G}_{\mathcal{A}_i}=\left\{(x_i,r_i)\,\left|\, \begin{array}{@{}lll} \alpha_i^j \mathrm{Re}\left(x_i\right)+ \beta_i^j \mathrm{Im}\left(x_i\right)\leq \gamma_i^j r_i,\\j=1,2,\ldots,M \end{array}\right.\right\},
\end{equation}
where
\begin{align*}
&\alpha_i^j=\cos\left(\frac{\theta_{i}^j+\theta_i^{j+1}}{2}\right),~\beta_i^j=\sin\left(\frac{\theta_i^{j}+\theta_i^{j+1}}{2}\right),\\[2pt]
&\gamma_i^j=\cos\left(\frac{\theta_i^{j+1}-\theta_i^{j}}{2}\right),
\end{align*} and $\theta_i^{M+1}=\theta_i^{1}+2\pi.$ {{
For any $r\geq 0,$ define $\mathcal{G}_{\mathcal{A}_i}(r)= \left\{ x_i \,|\, (x_i,r_i)\in \mathcal{G}_{\mathcal{A}_i},~r_i=r \right\}.$
Then, it is simple to see that $\mathcal{G}_{\mathcal{A}_i}(r)$ is a slice of $\mathcal{G}_{\mathcal{A}_i}$ with $r_i=r$ and
$\mathcal{G}_{\mathcal{A}_i} =\bigcup_{r\geq 0} \left\{ (x_i,r_i) \,|\, x_i \in \mathcal{G}_{\mathcal{A}_i}(r) \right\}.$ %Thus, by
%understanding the structure of set $\mathcal{X}_{\mathcal{A}_i}(t)$, we can easily imagine the structure of $\mathcal{G}_{\mathcal{A}_i}$.
An illustration of how $\mathcal{G}_{\mathcal{A}_i}(1)$ looks like for both continuous and discrete sets $\mathcal{A}_i$ is given in Fig. 1. %[[[\textbf{Is it possible to change x and y in Fig. 1 into $Re(x_i)$ and $Im(x_i)$?}]]]
}}
%for $j=1,2,\ldots,M.$ For simplicity, let $\theta_{M+1}:=\theta_{1}+2\pi$.
%
%if $\mathcal{A}_i$ is a finite set, then $\mathcal{G}_{\mathcal{A}_i}$ is a polyhedral set. [[[\textbf{Cheng: Please provide more details here. Let us give the general case with $M$ different $\theta_i$ satisfying $0\leq \theta_1 < \theta_2 < \cdots<\theta_M<2\pi$.}]]]

Now, let us consider the nonconvex set
{$$\left\{(X_{ii},r_i)\mid X_{ii}=r_i^2,~r_i\in \mathcal{B}_i\right\},$$}where $\mathcal{B}_i=[\ell_i,u_i]$.
Let $\mathcal{F}_{\mathcal{B}_i}$ be its convex envelope. We can show that
\begin{equation}\label{FBi}\mathcal{F}_{\mathcal{B}_i}=\left\{(X_{ii},r_i) \left| \begin{array}{@{}lll} X_{ii} \geq r_i^2,\\X_{ii} -(\ell_i+u_i)r_i +\ell_i u_i\leq 0\end{array}\right.\right\}.\end{equation} See Fig. 2
for an illustration of $\mathcal{F}_{\mathcal{B}_i}$. %[[[\textbf{Please add $X_{ii}$ and $r_i$ in Fig. 2.}]]]

\begin{figure}[!t]
\centering
\includegraphics[width=7.4cm]{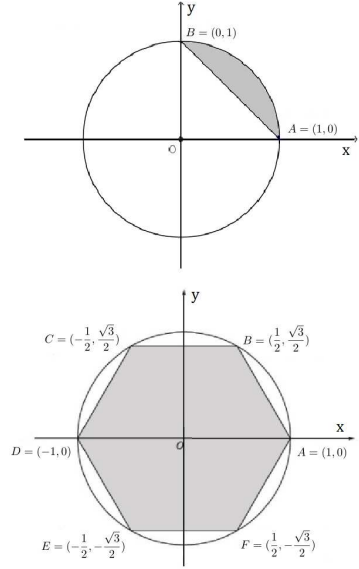}
\caption{{{An illustration of the set $\mathcal{G}_{\mathcal{A}_i}(1)$}}, where the top one corresponds to the continuous case where $\mathcal{A}_i=[0, \pi/2]$ and the bottom one corresponds to the discrete case where $\mathcal{A}_i=\left\{0, \pi/3, 2\pi/3, \pi, 4\pi/3, 5\pi/3\right\}$.}
%Illustration to the set $\left\{x_i|~(x_i,r_i)\in\mathcal{G}_{\mathcal{A}_i}\right\}$ for a fixed $r_i>0$. Upper is for continuous case, Lower
%is for discrete case.}
\end{figure}

\begin{figure}[!t]
\centering
\includegraphics[width=7.4cm]{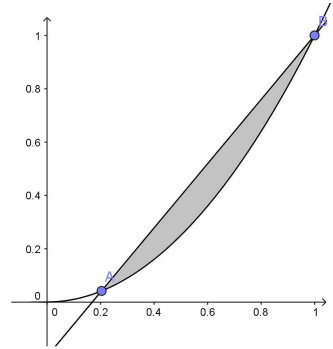}
\caption{An illustration of $\mathcal{F}_{\mathcal{B}_i}$ where $\mathcal{B}_i=[0.2, 1].$}
%Illustration to the set $\mathcal{F}_{\mathcal{B}_i}$.}
\end{figure}

Based on the above convex envelopes of two different types of nonconvex constraints, we can obtain the following enhanced SDR for problem (P):
\begin{align*}
\min_{\bx,\bX,\br} ~&~ \frac{1}{2} \bQ\bullet \bX+\mathrm{Re} \left(\bc^{\dagger}\bx\right) \\
\mbox{s.t.~} ~&~ \ell_i \leq r_{i}\leq u_i,~i=1,2,\ldots,n,\\ \tag{ECSDR}
%~&~ \ell_i^2 \leq X_{ii}\leq u_i^2,~i=1,2,\ldots,n,\\
&~(x_i,r_i)\in\mathcal{G}_{\mathcal{A}_i},~i=1,2,\ldots,n,\\
&~(X_{ii},r_i)\in \mathcal{F}_{\mathcal{B}_i},~i=1,2,\ldots,n,\\
&~ \bX\succeq \bx\bx^\dag,
\end{align*}
where $\br=[r_1,r_2,\ldots,r_n]^\T.$ %and $\bm{\theta}=[\theta_1,...\ldots,\theta_n]^T$.
Note that although $\theta_i~(i=1,2,\ldots,n)$ do not explicitly appear in problem (ECSDR), they play an important role in defining it. This is because the sets $\mathcal{G}_{\mathcal{A}_i}~(i=1,2,\ldots,n)$ and thus problem (ECSDR) are determined by the range $\mathcal{A}_i$ of $\theta_i.$
Throughout the paper, we denote $$\mathcal{D}=\prod_{i=1}^n \mathcal{A}_i \times \prod_{i=1}^n \mathcal{B}_i$$ as the collection of $\mathcal{A}_i$ and $\mathcal{B}_i$ for all $i=1,2,\ldots,n$
and denote ECSDR($\mathcal{D}$) as the corresponding instance of problem (ECSDR) defined over $\mathcal{D}$.

Note that the constraints $(X_{ii},r_i)\in \mathcal{F}_{\mathcal{B}_i}$ and $\ell_i \leq r_{i}\leq u_i$ in (ECSDR) imply $\ell^2_i \leq X_{ii}\leq u^2_i.$ Therefore,
problem (ECSDR) is generally a tighter relaxation for problem (P) than (CSDR)\footnote{The only case under which problems (ECSDR) and (CSDR) are equivalent is $\mathcal{A}_i=[0, 2\pi]$ for all $i=1,2,\ldots,n.$ See our discussion on this special case at the end of this section.}. First, (CSDR) completely neglects the argument constraints in problem (P) whereas the constraints $(x_i,r_i)\in\mathcal{G}_{\mathcal{A}_i}~(i=1,2,\ldots,n)$ in (ECSDR) carefully exploit the argument information. Moreover, the constraint $\bX=\bx\bx^\dagger$ is relaxed to $\bX\succeq \bx\bx^\dagger$ in (CSDR). The gap due to this relaxation is reduced in (ECSDR) because valid constraints $(X_{ii},r_i)\in \mathcal{F}_{\mathcal{B}_i}~(i=1,2,\ldots,n)$ are added in (ECSDR).

\subsection{Tightness and Relaxation Gap of (ECSDR)}
In this subsection, we study the tightness and the relaxation gap of the proposed (ECSDR). We first state the following proposition.

\begin{proposition}
%Assume that $\mathcal{A}_i=[\underline {\theta}_i, \bar\theta_i]$ with $\underline {\theta}_i<\bar\theta_i$, or  $\mathcal{A}_i$ is a discrete set with a finite number of elements.
{For each $i=1,2,\ldots,n,$} if $(x_i,r_i)\in\mathcal{G}_{\mathcal{A}_i}$ and $|x_i|=r_i$, then $\arg \left(x_i\right) \in \mathcal{A}_i$.
\end{proposition}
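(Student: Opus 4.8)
The plan is to prove the statement separately for the two cases that determine the structure of $\mathcal{G}_{\mathcal{A}_i}$, namely the continuous case $\mathcal{A}_i = [\underline\theta_i, \bar\theta_i]$ with $\bar\theta_i - \underline\theta_i \le \pi$, and the discrete case $\mathcal{A}_i = \{\theta_i^1, \ldots, \theta_i^M\}$. In both cases the key observation is that if $|x_i| = r_i$, then writing $x_i = r_i e^{\textsf{i}\phi}$ for some angle $\phi$ (taking $x_i = 0$, i.e. $r_i = 0$, as a trivial case where the argument constraint is vacuous or satisfied by convention), the linear inequalities defining $\mathcal{G}_{\mathcal{A}_i}$ become, after dividing by $r_i > 0$, trigonometric inequalities in $\phi$ alone, and I must show these force $\phi \in \mathcal{A}_i$.

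For the continuous case, I would substitute $\mathrm{Re}(x_i) = r_i\cos\phi$, $\mathrm{Im}(x_i) = r_i\sin\phi$ into the constraint $\alpha_i \mathrm{Re}(x_i) + \beta_i\mathrm{Im}(x_i) \ge \gamma_i r_i$ from \eqref{GAi1}. Using the definitions \eqref{alpha} of $\alpha_i, \beta_i, \gamma_i$ and the angle-subtraction identity, the left-hand side collapses to $r_i\cos\!\left(\phi - \tfrac{\underline\theta_i + \bar\theta_i}{2}\right)$, so the inequality reads $\cos\!\left(\phi - \tfrac{\underline\theta_i+\bar\theta_i}{2}\right) \ge \cos\!\left(\tfrac{\bar\theta_i - \underline\theta_i}{2}\right)$. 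Since $\tfrac{\bar\theta_i - \underline\theta_i}{2} \in [0, \pi/2]$ and cosine is decreasing on $[0,\pi]$ and even, this is equivalent (modulo $2\pi$) to $\left|\phi - \tfrac{\underline\theta_i + \bar\theta_i}{2}\right| \le \tfrac{\bar\theta_i - \underline\theta_i}{2}$, i.e. $\phi \in [\underline\theta_i, \bar\theta_i] = \mathcal{A}_i$, as desired.

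For the discrete case, the same substitution turns each of the $M$ constraints $\alpha_i^j\mathrm{Re}(x_i) + \beta_i^j\mathrm{Im}(x_i) \le \gamma_i^j r_i$ into $\cos\!\left(\phi - \tfrac{\theta_i^j + \theta_i^{j+1}}{2}\right) \le \cos\!\left(\tfrac{\theta_i^{j+1} - \theta_i^j}{2}\right)$. Each such inequality excludes the open arc of angles strictly between $\theta_i^j$ and $\theta_i^{j+1}$; taking the conjunction over $j = 1, \ldots, M$ (with $\theta_i^{M+1} = \theta_i^1 + 2\pi$) excludes every open arc between consecutive points, leaving exactly the points $\{\theta_i^1, \ldots, \theta_i^M\} = \mathcal{A}_i$ on the unit circle. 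Here I should be a little careful about the boundary: the non-strict inequality $\le$ keeps the endpoints $\theta_i^j$ and $\theta_i^{j+1}$, so a point $\theta_i^j$ survives all constraints precisely because it is an endpoint of the arcs adjacent to it and interior to none, which is the content of the assumption $\theta_i^{j+1} - \theta_i^j \le \pi$ ensuring the arcs cover the whole circle without a point being trapped strictly inside one of them.

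The main obstacle, such as it is, will be the bookkeeping in the discrete case: verifying that the $M$ half-plane constraints collectively carve the unit circle down to exactly the $M$ prescribed points, and handling the wrap-around index $\theta_i^{M+1} = \theta_i^1 + 2\pi$ cleanly. This amounts to checking that the arc between $\theta_i^j$ and $\theta_i^{j+1}$ (of length at most $\pi$) is precisely the set of $\phi$ violating the $j$-th inequality strictly, which follows from the monotonicity/evenness of cosine exactly as in the continuous case, and then observing $\bigcup_j (\theta_i^j, \theta_i^{j+1}) = [0,2\pi) \setminus \mathcal{A}_i$. I would also note at the outset that the $r_i = 0$ case is degenerate — the argument of $0$ is conventionally unconstrained — so the substitution $x_i = r_i e^{\textsf{i}\phi}$ with $r_i > 0$ loses no generality.
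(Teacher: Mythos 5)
Your proof is correct, and it is worth noting that the paper itself does not actually prove this proposition: the authors explicitly write that ``instead of providing a rigorous proof for Proposition 2, we give an illustration'' via Fig.~1, i.e.\ they only point out that the slice $\mathcal{G}_{\mathcal{A}_i}(1)$ meets the unit circle exactly in the arc (or finite point set) prescribed by $\mathcal{A}_i$. Your argument is the rigorous version of exactly that picture. The polar substitution $x_i = r_i e^{\textsf{i}\phi}$ collapses each defining linear inequality of $\mathcal{G}_{\mathcal{A}_i}$ to $\cos\bigl(\phi - \tfrac{\underline\theta_i+\bar\theta_i}{2}\bigr) \ge \cos\bigl(\tfrac{\bar\theta_i-\underline\theta_i}{2}\bigr)$ in the continuous case (forcing $\phi\in[\underline\theta_i,\bar\theta_i]$ by monotonicity and evenness of cosine, using $\bar\theta_i-\underline\theta_i\le\pi$), and in the discrete case each inequality $\cos\bigl(\phi-\tfrac{\theta_i^j+\theta_i^{j+1}}{2}\bigr)\le\cos\bigl(\tfrac{\theta_i^{j+1}-\theta_i^j}{2}\bigr)$ excises precisely the open arc $(\theta_i^j,\theta_i^{j+1})$, so the conjunction over $j$ (with the wrap-around index) leaves only the $M$ prescribed angles. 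Both reductions are sound, the hypothesis that consecutive angles differ by at most $\pi$ is used exactly where it is needed, and your handling of the degenerate $r_i=0$ case matches the convention implicit in the paper's definition of $\mathcal{S}_{\mathcal{A}_i}$. In short, you have filled a gap the authors deliberately left open, rather than reproduced their argument; what your approach buys is an actual proof, at the modest cost of the trigonometric bookkeeping in the discrete case.
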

%\begin{proof}
%{\color{red}{Recall that $\mathcal{G}_{\mathcal{A}_i}$ is the convex envelope of the set $\mathcal{S}_{\mathcal{A}_i}$ in \eqref{nonconvexsetAi}. %$$\left\{(x_i,r_i)\,|\,x_i=r_i e^{\textbf{i}\theta_i},~\theta_i\in \mathcal{A}_i,~r_i\geq0\right\}.$$
%Then, for any $(x_i, r_i)\in \mathcal{G}_{\mathcal{A}_i},$ there must exist $(x_i^j, r_i^j)\in \mathcal{S}_{\mathcal{A}_i}$ and $\alpha_i^j\geq 0, j\in\mathcal{J}$
%%Let $(x,r)\in \mathcal{G}_{\mathcal{A}_i}$. By definition, there exist
%%$$(x_j,r_j)\in \left\{(x_i,r_i)\,|\,x_i=r_i e^{\textbf{i}\theta_i},~\theta_i\in \mathcal{A}_i,~r_i\geq0\right\}$$
%%for $j=1,\ldots,k$,
%such that
%$$x_i=\sum_{j\in\mathcal{J}}\alpha_i^j x_i^j~\text{and}~r_i=\sum_{j\in\mathcal{J}}\alpha_i^j r_i^j,$$ where
%$\sum_{j\in\mathcal{J}}\alpha_i^j=1.$ One can check that
%$$\left|x_i\right| = \left|\sum_{j\in\mathcal{J}}\alpha_i^j x_i^j\right| \leq \sum_{j\in\mathcal{J}}\alpha_i^j \left|x_i^j\right| = \sum_{j\in\mathcal{J}}\alpha_i^j r_i^j = r_i$$
%and the above inequality holds with equality if and only if all of $x_i^j$ have the same argument, which is also the argument of $x_i.$
%Therefore, if $|x_i|=r_i$, we must have $\arg \left(x_i\right) \in \mathcal{A}_i$}.}
%%
%%
%%$\mu_1,\ldots,\mu_k> 0$ and $\mu_1+\ldots+\mu_k=1$.
%%Note that $$|x|=|\mu_1 x_1+\cdots+\mu_k x_k | \leq \mu_1 r_1+\cdots+\mu_k r_k =r,$$
%%the equality holds only when $x_1,\ldots,x_k$ has the same phase angle (which also equals the phase angle of $x$). Hence, if $|x_i|=r_i$, we have $\arg \left(x\right) \in \mathcal{A}_i$}}
%\end{proof}

Instead of providing a rigorous proof for Proposition 2, {{we give an illustration of Proposition 2 using Fig. 1}}.
%For a given $r_i=1\in[\ell_i,u_i]$ with $r_i>0$,
{Consider the set $\mathcal{G}_{\mathcal{A}_i}(1)$ in Fig. 1. %(where $r_i=1$).
It is simple to see from Fig. 1 that if $|x_i|=1$, then $\arg\left(x_i\right)\in \mathcal{A}_i.$} %is satisfied.

The gap between relaxation (ECSDR) and problem (CQP) is generally nonzero. %, so that ECSDR is not exact.
The following theorem presents a tightness result of relaxation (ECSDR).
\begin{theorem}
Let $\left(\bar{\bx},\bar{\bX},\bar{\br}\right)$ be an optimal solution of problem (ECSDR).
If $|\bar{x}_i|=\bar{r}_i$ and $\bar{X}_{ii}=\bar{r}^2_i$ for all $i=1,2,\ldots,n$, then $\bar{\bx}$ is a global solution of problem (CQP) and thus relaxation (ECSDR) is tight.
\end{theorem}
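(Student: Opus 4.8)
The plan is to combine two observations: that $\bar{\bx}$ is feasible for (CQP), and that the optimal value of (ECSDR) equals $F(\bar{\bx})$. Since (ECSDR) is a relaxation of (CQP), its optimal value is a lower bound for $\nu^*$; these two facts together pin down $F(\bar{\bx})=\nu^*$. First I would verify feasibility of $\bar{\bx}$. The constraint $\ell_i\le \bar{r}_i\le u_i$ of (ECSDR) together with the hypothesis $|\bar{x}_i|=\bar{r}_i$ gives $\ell_i\le |\bar{x}_i|\le u_i$ for every $i$. Moreover, since $(\bar{x}_i,\bar{r}_i)\in\mathcal{G}_{\mathcal{A}_i}$ and $|\bar{x}_i|=\bar{r}_i$, Proposition 2 yields $\arg(\bar{x}_i)\in\mathcal{A}_i$. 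Hence $\bar{\bx}$ satisfies all constraints of problem (CQP).

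Next I would show $\bar{\bX}=\bar{\bx}\bar{\bx}^\dagger$. From $(\bar{X}_{ii},\bar{r}_i)\in\mathcal{F}_{\mathcal{B}_i}$ in \eqref{FBi} we have $\bar{X}_{ii}\ge \bar{r}_i^2$, and together with the hypothesis $\bar{X}_{ii}=\bar{r}_i^2=|\bar{x}_i|^2$ this holds with equality. Thus $\bar{\bX}-\bar{\bx}\bar{\bx}^\dagger\succeq\mathbf{0}$ (by the last constraint of (ECSDR)) and has all diagonal entries equal to zero, so it is the zero matrix --- this is exactly the argument already used right after \eqref{eq3}. Consequently $\bQ\bullet\bar{\bX}=\bQ\bullet(\bar{\bx}\bar{\bx}^\dagger)=\mathrm{Trace}(\bQ\bar{\bx}\bar{\bx}^\dagger)=\bar{\bx}^\dagger\bQ\bar{\bx}$, where I use that $\bQ$ is Hermitian, and therefore the objective of (ECSDR) evaluated at $(\bar{\bx},\bar{\bX},\bar{\br})$ equals $\tfrac12\bar{\bx}^\dagger\bQ\bar{\bx}+\mathrm{Re}(\bc^\dagger\bar{\bx})=F(\bar{\bx})$.

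Finally I would invoke that (ECSDR) genuinely relaxes (CQP): for any $\bx$ feasible for (CQP), the triple $(\bx,\bx\bx^\dagger,\br)$ with $r_i=|x_i|$ is feasible for (ECSDR) with the same objective value, since $\mathcal{S}_{\mathcal{A}_i}\subseteq\mathcal{G}_{\mathcal{A}_i}$ and $\{(X_{ii},r_i)\mid X_{ii}=r_i^2,~r_i\in\mathcal{B}_i\}\subseteq\mathcal{F}_{\mathcal{B}_i}$ (these are convex envelopes of the respective sets), which is essentially Proposition 1. Hence the optimal value of (ECSDR) is at most $\nu^*$. On the other hand, by the previous two steps this optimal value equals $F(\bar{\bx})$ with $\bar{\bx}$ feasible for (CQP), so $F(\bar{\bx})\ge\nu^*$. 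Combining the two inequalities gives $F(\bar{\bx})=\nu^*$, i.e., $\bar{\bx}$ is a global solution of (CQP) and relaxation (ECSDR) is tight.

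I do not expect a substantive obstacle here. The one point that requires care is confirming that (ECSDR) is a valid relaxation (so that its optimal value lower-bounds $\nu^*$), which rests on Proposition 1 and the superset property of the convex envelopes $\mathcal{G}_{\mathcal{A}_i}$ and $\mathcal{F}_{\mathcal{B}_i}$; the remaining key ingredient --- a positive semidefinite matrix with zero diagonal is the zero matrix --- has already been established in the paper.
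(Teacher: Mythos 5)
Your proof is correct and follows essentially the same route as the paper's: feasibility of $\bar{\bx}$ via Proposition 2 and the hypothesis $|\bar{x}_i|=\bar{r}_i$, the identity $\bar{\bX}=\bar{\bx}\bar{\bx}^\dagger$ via the positive-semidefinite-with-zero-diagonal argument underlying Proposition 1, and the standard lower-bound property of the relaxation to conclude optimality. You merely spell out explicitly the final step (that the relaxation's optimal value sandwiches $F(\bar{\bx})$ against $\nu^*$) which the paper leaves implicit.
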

\begin{proof}
Let $\theta_i=\arg \left(\bar{x}_i\right),~i=1,2,\ldots,n$. For each $i=1,2,\ldots,n,$ since $|\bar{x}_i|=\bar{r}_i,$ it follows from Proposition 2 that $\theta_i\in \mathcal{A}_i$
and $\bar x_i=\bar r_i e^{\textsf{i}\theta_i}$. Furthermore, by the assumption that $\bar{X}_{ii}=\bar{r}^2_i$ for all $i=1,2,\ldots,n$ and Proposition 1, we have that $\left(\bar{\bx},\bar{\bX}\right)$ is feasible to problem (P) (and in particular $\bar{\bX}=\bar{\bx}\bar{\bx}^\dag$). Therefore, $\bar{\bx}$ is a global solution of problem (CQP)
and relaxation (ECSDR) is tight.
%Hence,  is a rank-one solution, which implies that
\end{proof}

In the general case, (ECSDR) might not be tight for problem (CQP). In the case that the relaxation
gap is nonzero, it follows from Theorem 1 that there must exist some index $i\in\{1,2,\ldots,n\}$, such that $|\bar{x}_i|<\bar{r}_i$ and/or
$\bar{X}_{ii}>\bar{r}^2_i$.
Let \begin{equation}\label{barundertheta}\bar{\theta}_i=\max \left\{\mathcal{A}_{i}\right\}~\text{and}~\underline{\theta}_i:=\min\left\{\mathcal{A}_{i}\right\}.\end{equation} Next, we provide two tightness estimates of (ECSDR) in the following Proposition~\ref{lemma-width}, whose proof can be found in Appendix \ref{applemmawidth}.

\begin{proposition}\label{lemma-width}
For a given set {$\mathcal{A}_{i} \subseteq [\underline {\theta}_i,\bar\theta_i]$} with $\bar\theta_i-\underline {\theta}_i\leq \pi$,
if $(x_i,r_i)\in\mathcal{G}_{\mathcal{A}_i}$, then \begin{equation}\label{eq10}r_i\geq |x_i|\geq r_i\cos\left(\frac{\bar\theta_i - \underline {\theta}_i}{2}\right).\end{equation}
for a given set $\mathcal{B}_{i}=[\ell_i,u_i]$,
if $(X_{ii},r_i)\in \mathcal{F}_{\mathcal{B}_i}$, then
\begin{equation}\label{eq11}0\leq X_{ii}-r_i^2\leq \frac{(u_i-\ell_i)^2}{4}.\end{equation}
%For a set $\mathcal{A}_{i} \subseteq [\underline{\theta}_i,\bar{\theta}_i]$ with $\bar{\theta}_i-\underline {\theta}_i\leq\pi$,
%if $(x_i,r_i)\in\mathcal{G}_{\mathcal{A}_i}$, then
%\begin{equation}\label{eq10}
%r_i\geq |x_i|\geq r_i\cos\frac{\bar{\theta}_i - \underline{\theta}_i}{2}.
%\end{equation}
\end{proposition}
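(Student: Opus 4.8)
The plan is to prove the two displayed estimates separately; each one reduces, once the explicit description of the relevant convex envelope from the previous subsection is unwrapped, to an elementary scalar inequality, so no sophisticated machinery is needed.

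For the bound \eqref{eq10}, the first step is to reduce to the case in which $\mathcal{A}_i$ is the full interval $[\underline{\theta}_i,\bar\theta_i]$. Since $\mathcal{A}_i\subseteq[\underline{\theta}_i,\bar\theta_i]$, we have $\mathcal{S}_{\mathcal{A}_i}\subseteq \mathcal{S}_{[\underline{\theta}_i,\bar\theta_i]}$ with both sets defined as in \eqref{nonconvexsetAi}, and taking convex envelopes preserves inclusion, hence $\mathcal{G}_{\mathcal{A}_i}\subseteq \mathcal{G}_{[\underline{\theta}_i,\bar\theta_i]}$; because $\bar\theta_i-\underline{\theta}_i\leq \pi$, the larger set is described explicitly by \eqref{GAi1} (this handles the continuous and discrete cases for $\mathcal{A}_i$ uniformly). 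Therefore any $(x_i,r_i)\in\mathcal{G}_{\mathcal{A}_i}$ satisfies $|x_i|\leq r_i$, which is already the left inequality in \eqref{eq10}, together with $\alpha_i\mathrm{Re}(x_i)+\beta_i\mathrm{Im}(x_i)\geq \gamma_i r_i$, where by \eqref{alpha} the pair $(\alpha_i,\beta_i)$ is a unit vector and $\gamma_i=\cos\!\left(\frac{\bar\theta_i-\underline{\theta}_i}{2}\right)$. Applying the Cauchy--Schwarz inequality gives $\alpha_i\mathrm{Re}(x_i)+\beta_i\mathrm{Im}(x_i)\leq\sqrt{\mathrm{Re}(x_i)^2+\mathrm{Im}(x_i)^2}=|x_i|$, and combining this with the previous inequality yields $|x_i|\geq \gamma_i r_i=r_i\cos\!\left(\frac{\bar\theta_i-\underline{\theta}_i}{2}\right)$, the right inequality in \eqref{eq10}.

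For the bound \eqref{eq11}, I would simply read off the two defining inequalities of $\mathcal{F}_{\mathcal{B}_i}$ in \eqref{FBi}. The inequality $X_{ii}\geq r_i^2$ gives $X_{ii}-r_i^2\geq 0$ at once. The inequality $X_{ii}-(\ell_i+u_i)r_i+\ell_i u_i\leq 0$ gives $X_{ii}-r_i^2\leq -(r_i-\ell_i)(r_i-u_i)$; the right-hand side is a concave quadratic in $r_i$ attaining its maximum at $r_i=(\ell_i+u_i)/2$, where it equals $(u_i-\ell_i)^2/4$, which completes the argument.

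I do not expect a genuine obstacle here. The only step that deserves a line of care is the reduction in the first part, namely checking that the convex envelope is monotone under set inclusion and that $\mathcal{G}_{[\underline{\theta}_i,\bar\theta_i]}$ is given by \eqref{GAi1} under the standing hypothesis $\bar\theta_i-\underline{\theta}_i\leq\pi$; once that is granted, both estimates follow immediately from Cauchy--Schwarz and the fact that a downward parabola is maximized at its vertex.
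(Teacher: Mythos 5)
Your proposal is correct and follows essentially the same route as the paper's proof in Appendix A: reduce to the interval case via monotonicity of the convex envelope under set inclusion, apply Cauchy--Schwarz to the supporting-halfplane inequality from \eqref{GAi1} to get the lower bound on $|x_i|$, and bound $X_{ii}-r_i^2$ by maximizing the concave quadratic $-(r_i-\ell_i)(r_i-u_i)$ at its vertex.
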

%\begin{proof}
%See Subsection A of the Appendix.
%\end{proof}

%\begin{lemma}
%For a set $\mathcal{B}_{i}=[\ell_i,u_i]$,
%if $(X_{ii},r_i)\in \mathcal{F}_{\mathcal{B}_i}$, then
%\begin{equation}\label{eq11}
%0\leq X_{ii}-r_i^2\leq \frac{\left(u_i-\ell_i\right)^2}{4}.
%\end{equation}
%\end{lemma}
%\begin{proof}
%See Subsection B of the Appendix.
%\end{proof}

Define the width of $\mathcal{A}_i$ as \begin{equation}\label{WAi}\omega(\mathcal{A}_{i}):=\bar{\theta}_i-\underline{\theta}_i,\end{equation} where $\bar \theta_i$ and $\underline{\theta}_i$ are defined in \eqref{barundertheta},
and the width of $\mathcal{B}_{i}$ as \begin{equation}\label{WBi}\omega(\mathcal{B}_{i}):=u_i-\ell_i.\end{equation}
We can see from Proposition~\ref{lemma-width} that: (1) when $\omega(\mathcal{A}_{i})$ becomes close to zero,
$|x_i|$ will be close to $r_i$ and the constraint $(x_i,r_i)\in\mathcal{G}_{\mathcal{A}_i}$ will be very effective in reducing the difference between $r_i$ and $|x_i|$; (2) similarly, when $\omega(\mathcal{B}_{i})$ becomes close to zero, $X_{ii}$ will be close to $r^2_i$ and
the constraint $(X_{ii},r_i)\in \mathcal{F}_{\mathcal{B}_i}$ will be very effective in reducing the difference between $X_{ii}$ and $r_i^2.$

Now let us discuss two very special cases of (ECSDR). The first case is $\ell_i=u_i=1$ for all $i=1,2,\ldots,n.$ In this case, the set $\mathcal{F}_{\mathcal{B}_i}$ reduces $\mathcal{F}_{\mathcal{B}_i}= \left\{ (X_{ii},r_i)\,|\,X_{ii} = 1,r_i=1\right\}$ and thus the constraint $X_{ii}=r^2_i$ is always satisfied. Hence, if the gap
between problem (CQP) and its relaxation (ECSDR) is nonzero, then the gap must be due to the convex relaxation $(x_i,r_i)\in\mathcal{G}_{\mathcal{A}_i}$,
i.e., there must exist some $i\in\{1,2,\ldots,n\}$ such that $r_i>|x_i|$.
This special case has been studied in \cite{Lu2018,lu2019tightness}. This paper studies a more general problem (CQP) (with interval modulus constraints) and can be regarded as a nontrivial extension from the unit-modular case in \cite{Lu2018,lu2019tightness}. Note that it is straightforward to obtain an SDR for the unit-modulus problem while it is not easy to obtain the SDR for problem (CQP), as the latter requires the use of the polar coordinate representation of the complex variables and Proposition 1.
%
%In particular, the relaxation (ECSDR) is an extension of the SDRs in \cite{Lu2018,lu2019tightness} and the proposed algorithm (to be presented in the next section) is an extension of the algorithms in \cite{Lu2018}.}}
The second case is $\mathcal{A}_{i} =[0,2\pi]$ for all $i=1,2,\ldots,n.$ In this case, we can show that (ECSDR) is equivalent to (CSDR).
To be more specific, the constraints $(x_i,r_i)\in\mathcal{G}_{\mathcal{A}_i}$ and $(X_{ii},r_i)\in \mathcal{F}_{\mathcal{B}_i}$ in this case become $$r_i\geq |x_i|,\,X_{ii} \geq r^2_i,\,X_{ii} -(\ell_i+u_i)r_i+\ell_i u_i\leq 0.$$
Then, for any feasible solution $(\bx,\bX)$ of (CSDR), we can set $r_i=\sqrt{X_{ii}}$  for $i=1,2,\ldots,n$ and check that $(\bx,\bX,\br)$ is a feasible solution of (ECSDR). Therefore, the two relaxations (ECSDR) and (CSDR) are equivalent in this case\footnote{Although relaxation (ECSDR) is not tighter than (CSDR) in this special case,
(ECSDR) still plays an important role of generating the lower bounds in the ECSDR-BB
algorithm in the next section, where as the set $\mathcal{A}_i$ is recursively partitioned into smaller subsets, the quality of relaxation (ECSDR) defined over the subsets will become better than that of (CSDR).}. Except this special case, (ECSDR) is tighter than (CSDR) as discussed before and as will be illustrated later in Section \ref{sec:numerical}.  %[[[\textbf{Maybe we can delete the last two sentences. }]]]

\section{Proposed Global Branch-and-Bound Algorithm}\label{sec:bbalgorithm}

In this section, we propose a global branch-and-bound algorithm based on the enhanced relaxation (ECSDR)
for solving problem (CQP) (equivalent to problem (P)). A typical branch-and-bound algorithm (for a minimization
problem) is generally based on an enumeration procedure, which
partitions the feasible region to smaller subregions and constructs
sub-problems over the partitioned subregions recursively.
In the enumeration procedure, a lower bound for each subproblem
is estimated by solving a relaxation problem. Meanwhile, an
upper bound is obtained from the best known feasible solution
generated by the enumeration procedure or by some other local
optimization/heuristic algorithms. The procedure terminates until the difference between the upper bound and the lower bound is smaller than the given error tolerance $\epsilon>0$, and then an $\epsilon$-optimal solution (defined as below) can be obtained.

\begin{definition}[$\epsilon$-Optimal Solution]\label{esolution}
  Given any $\epsilon>0,$ a feasible point $\bx$ is called an $\epsilon$-optimal solution of problem (CQP) if it satisfies ${F(\bx)-\nu^*} \leq \epsilon.$
\end{definition}

In the remaining part of this section, we first present our proposed branch-and-bound algorithm for solving problem (CQP) in Section III-A.
Then, we show that our proposed branch-and-bound algorithm indeed can find an $\epsilon$-optimal solution of problem (CQP) (for any given $\epsilon>0$) and
analyze its worst-case iteration complexity in Section III-B.

\subsection{Proposed Algorithm}

%which is defined by the following scaling operation:
%\begin{equation}\label{scale}
%\hat{\bx}^k=\textrm{Scale}(\bx^k,\br^k)=\left[r^k_1 e^{\textbf{i}\hat{\theta}^k_1},\ldots,r^k_n e^{\textbf{i}\hat{\theta}^k_n}\right]^{\T},
%\end{equation}
%where $\hat{\theta}^k_i \in \mathcal{A}^k_i$ is the point in $\mathcal{A}^k_i$ that is nearest to $\arg \left(x^k_i\right)$.
%Let $U^*$ denote the upper bound, and $\bx^*$  denote the best known feasible solution of problem (CQP).

To develop a branch-and-bound algorithm for solving problem (CQP), let us first recall Theorem 1. Theorem 1 shows that, if the gap between problem (CQP) and its corresponding relaxation (ECSDR) is not zero, then there must exist some $i\in\{1,2,\ldots,n\}$ with $|\bar{x}_i|<\bar{r}_i$ and/or
$\bar{X}_{ii}>\bar{r}^2_i$. Moreover, Proposition \ref{lemma-width} further shows that we can partition
the sets $\mathcal{A}_i$ and $\mathcal{B}_i$ to reduce the difference $\bar{r}_i-|\bar{x}_i|$ and $\bar{X}_{ii}-\bar{r}^2_i$, respectively.
%, which somehow measures the tightness of the relaxation (ECSDR).
Based on the above observations, we are now ready to present the main steps of the branch-and-bound algorithm. For ease of presentation, we introduce the following notations. Let $\mathcal{D}^0= \prod_{i=1}^n \mathcal{A}_i^0 \times  \prod_{i=1}^n \mathcal{B}_i^0$ with $\mathcal{A}_i^0=\mathcal{A}_i$ and $\mathcal{B}_i=[\ell_i,u_i]$
be the initial feasible set of the polar coordinate variables $\left\{\theta_i\right\}$ and $\br$, and $\mathcal{D}^k=\prod_{i=1}^n \mathcal{A}^k_i \times \prod_{i=1}^n \mathcal{B}^k_i \subseteq \mathcal{D}^0$
be a partitioned subset indexed by $k$.

\textbf{Lower Bound.}
In the branch-and-bound algorithm, the initial feasible set $\mathcal{D}^0$ will be recursively partitioned into smaller subsets.
Obviously, the optimal value $L^k$ of the relaxation problem
ECSDR$\left({\mathcal{D}^k}\right)$ is a lower bound of the optimal value of problem (CQP) defined over the subset $\mathcal{D}^k.$ Therefore, the smallest lower bound among all bounds is a lower bound of the optimal value of the original problem (CQP). This statement will be formally summarized in Theorem \ref{thm-epssolution}.% and proved in Section III-B.

\textbf{Upper Bound.}
%An upper bounding scheme is an important component of the branch-and-bound algorithm is .
An upper bound of problem (CQP) can be obtained by appropriately scaling the solution of any relaxation problem. More specifically, we solve {a} relaxation (ECSDR) (defined over a partitioned subset) %for each node
to obtain its optimal solution $(\bar{\bx},\bar{\bX},\bar{\br})$. Then, we generate a feasible solution of problem (CQP) by using the following scaling operation
%\begin{equation}\label{scale}
%$\hat{\bx}=\textrm{Scale}(\bar{\bx},\bar{\br}),$
%where
\begin{equation}\label{scale}
\hat{\bx}=\mathrm{Scale}(\bar{\bx},\bar{\br}):=\left[\bar{r}_1 e^{\textsf{i}\hat{\theta}_1},\ldots,\bar{r}_n e^{\textsf{i}\hat{\theta}_n}\right]^{\T},
\end{equation}
{{where $$\hat{\theta}_i\in \arg\min_{\theta\in \mathcal{A}_i} \min\left\{|\theta_i- \arg(\bar{x}_i)|,~2\pi-|\theta_i- \arg(\bar{x}_i)|\right\}$$ and $\mathcal{A}_i$ is normalized to satisfy $\mathcal{A}_i\subseteq (-\pi, \pi].$ If the optimal solution to the above problem is not unique, then we just pick one solution.}} %We assume that the problem can be easily solved,
%and if there exist more than one optimal solutions, then $\hat{\theta}_i$ can be any one that randomly selected from these optimal solutions.}}
It is simple to check that the above $\hat{\bx}$ is feasible to problem (CQP). Consequently, $F(\hat{\bx})$ is an upper bound of problem (CQP).
In our branch-and-bound algorithm, we use $U^*$ to denote the best upper bound during the enumeration procedure (i.e., the smallest objective values at all of known feasible solutions at the current iteration) and use $x^*$ to denote the solution that achieves the smallest upper bound. %Once a new subproblem is generated, we
%will check whether a new smaller upper bound is found. If so, then we update the values of $U^*$ and $x^*$ accordingly.

%In the enumeration procedure, different subproblems (defined on different partitioned subsets)
%will be generated. For each generated subproblem, we solve its corresponding relaxation (ECSDR),
%and scale its optimal solution to get a feasible solution of problem (CQP). Since different  subproblems generate different solutions and lead to different upper bounds,

In the proposed branch-and-bound algorithm, we construct a so-called node for each subproblem. The node is denoted as $\left\{\mathcal{D}^k,\bx^{k},\bX^k,\br^k,\hat{\bx}^k,L^{k}\right\}$, in which $\mathcal{D}^k$ is the partitioned subset of the subproblem, $(\bx^{k},\bX^k,\br^k)$ and $L^k$ are
the optimal solution and the optimal value of problem ECSDR$\left({\mathcal{D}^k}\right)$, and $\hat{\bx}^k=\mathrm{Scale}({\bx}^k,{\br}^k)$ (with the operator $\mathrm{Scale}(\cdot,\cdot)$ being defined in \eqref{scale}).

\textbf{Termination Criterion.} If
\begin{equation}\label{rgap}{U^*}-{L^{k}}\leq \epsilon\end{equation}
{{at iteration $k$,}} where $\epsilon$ is the preselected error tolerance, we terminate the algorithm; otherwise we select a node and branch the feasible set of a variable according to some rule. We can see from \eqref{rgap} that, %the efficiency of our proposed algorithm significantly relies on both the lower and upper bounds and
both lower and upper bounds are important to avoid unnecessary branches and enumerations and good lower and upper bounds can significantly improve the computational efficiency of our proposed algorithm. Below, we shall introduce our node selection and branching rules one by one. %as well as lower and upper bounds one by one.% below.

\textbf{Node Selection Rule.} For node $k$, if its lower bound $L^k$
is larger than the upper bound $U^*$, then the global solution of the original problem cannot be located in the set associated with this node. We call a node as an \emph{active} node if its lower bound is smaller than the best known upper bound. Therefore, all of the inactive nodes will not be enumerated in the branch-and-bound algorithm. Let us use $\mathcal{P}$ to denote the set of all active nodes. Our selection rule is to select the active node with the smallest lower bound from $\mathcal{P}$ (to be branched) at each iteration.
%and
%, we. If there
%are more than one active nodes in $\mathcal{P}$, we select the node that has the lowest lower
%bound.

\textbf{Branching Rule.} Let $\left\{\mathcal{D}^k,\bx^{k},\bX^k,\br^k,\hat{\bx}^k,L^{k}\right\}$ be the selected node that has the smallest lower bound in $\mathcal{P}$ and let
\begin{equation}\label{eq18}\begin{array}{l}
i_1^* = \displaystyle \arg\max_{i}\left\{ \left|\hat{x}^k_i-x^k_i\right|\right\},\\[5pt] S^*_1= \displaystyle  \max_{i}\left\{\left|\hat{x}^k_i-x^k_i\right|\right\}, \\[5pt]
i_2^* = \displaystyle \arg\max_{i}\left\{X^k_{ii}-\left(r^k_i \right)^2\right\},\\[5pt] S^*_2=\displaystyle  \max_{i}\left\{X^k_{ii}-\left(r^k_i\right)^2\right\}.
\end{array}
\end{equation} The quantity $\max\left\{S_1^*, S^*_2\right\}$ somehow measures the gap of the corresponding relaxation (ECSDR).
If $S^*_1 \geq S^*_2$, then we select $\mathcal{A}^k_{i_1^*}$ to branch; othewise we select $\mathcal{B}^k_{i_2^*}$ to branch. The selected set is branched into two subsets by
the following rule: if the selected set is an interval, then we partition it into two sub-intervals with equal
lengths; if the selected set is a finite set (in the case where $S^*_1 \geq S^*_2$ and $\mathcal{A}^k_{i_1^*}$ is a
finite set), then we partition the set into two subsets $$\left\{ \theta\mid\theta\in \mathcal{A}^k_{i_1^*}, \theta\leq \theta_{i_1^*}^k \right\}~\text{and}~\left\{ \theta\mid\theta\in \mathcal{A}^k_{i_1^*}, \theta> \theta_{i_1^*}^k \right\},$$ where $$\theta_{i_1^*}^k=\frac{1}{2}\left(\min\left\{ \mathcal{A}^k_{i_1^*}\right\}+\max \left\{\mathcal{A}^k_{i_1^*}\right\}\right).$$
Based on the above rules, we branch the set $\mathcal{D}^k$ into two new sets (denoted as $\mathcal{D}^k_{-}$ and $\mathcal{D}^k_{+}$). It follows from Proposition \ref{lemma-width} that the corresponding relaxation problems defined over the newly obtained two sets $\mathcal{D}^k_{-}$ and $\mathcal{D}^k_{+}$, i.e., the two children problems, are tighter than the one defined over the original set $\mathcal{D}^k.$ Once $\mathcal{D}^k$ has been branched into two sets, the problem instance defined over it will be deleted from the problem list $\mathcal{P}$ and the two children problems will be added into $\mathcal{P}$ if their {{lower bounds}} are less than or equal to the current best upper bound.

%Based on the above procedures, a complete branch-and-bound algorithm can be designed.
By judiciously combining the above main steps, we can obtain
our proposed branch-and-bound algorithm for solving problem (CQP) (equivalent to problem (P)).
The pseudo-codes of our proposed algorithm are given in Algorithm 1. We will call the
algorithm ECSDR-BB (\textbf{E}nhanced \textbf{C}omplex \textbf{S}emi\textbf{D}efinite \textbf{R}elaxation based \textbf{B}ranch-and-\textbf{B}ound) for short.

\begin{figure}
\small{
\textbf{Algorithm 1: ECSDR-BB Algorithm for Solving Problem (CQP)}
\begin{algorithmic}[1]
%\Require
%An instance of CQPP, a given error tolerance $\epsilon>0$, and initial argument constraints $\mathcal{D}^0=\prod_{i=1}^n \mathcal{A}^0_i \times \prod_{i=1}^n \mathcal{B}^0_i:=\prod_{i=1}^n [l_i,u_i] \times \prod_{i=1}^n [0,2\pi]$.
\STATE \textbf{input:} An instance of problem (CQP) and an error tolerance $\epsilon>0.$%, and initial feasible domain $\mathcal{D}^0=\prod_{i=1}^n \mathcal{A}^0_i \times \prod_{i=1}^n \mathcal{B}^0_i:= \prod_{i=1}^n \mathcal{A}_i \times  \prod_{i=1}^n [\ell_i,u_i] $.
\STATE Initialize $\mathcal{P}=\emptyset,$ $\mathcal{D}^0=\prod_{i=1}^n \mathcal{A}^0_i \times \prod_{i=1}^n \mathcal{B}^0_i:=\prod_{i=1}^n \mathcal{A}_i \times \prod_{i=1}^n [\ell_i,u_i],$ and set $k=0.$ // \texttt{Initialization.}
\STATE Solve ECSDR$\left({\mathcal{D}^0}\right)$ for its optimal solution $\left(\bx^{0},\bX^{0},\br^0\right)$ and its optimal value $L^{0}$. // \texttt{Solve relaxation (ECSDR) at the root node.}
\STATE Compute the {feasible point} $\hat{\bx}^0=\mathrm{Scale}(\bx^0,\br^0).$
\STATE Set $U^{*}=F(\hat{\bx}^0)$ and $\bx^{*}=\hat{\bx}^0$. // \texttt{Initial Upper Bound and Optimal Solution.} %, $k=0$. Construct an empty set $\mathcal{P}$.
\STATE Add $\left\{\mathcal{D}^0,\bx^{0},\bX^0,\br^0,\hat{\bx}^0,L^{0}\right\}$ into the node list $\mathcal{P}$.
\LOOP
\STATE Set $k\leftarrow k+1$.
\STATE Using {the \textbf{Node Selection Rule}} to choose a problem from $\mathcal{P},$ denoted as $\{\mathcal{D}^k,\bx^{k},\bX^k,\br^k,\hat{\bx}^k,L^{k}\},$ such that $L^{k}$
 is the smallest one in $\mathcal{P}$. \label{line:lowerbound} // \texttt{Lower Bound.}
\STATE Delete the chosen node from $\mathcal{P}$.
\IF{$U^{*}-L^{k}\leq\epsilon$} %\texttt{{Check the termination criterion \eqref{rgap}}.}
\STATE return $\bx^{*}$ and $U^*$ and terminate the algorithm. \label{line:terminate}// \texttt{{Terminate if \eqref{rgap} is satisfied.}}
\ENDIF
\STATE Choose the set according to \eqref{eq18} and branch $\mathcal{D}^k$ into two subsets $\mathcal{D}^k_{-}$ and $\mathcal{D}^k_{+}$ by using {the \textbf{Branching Rule}}. // \texttt{Branch.}
\FOR{{$s \in \left\{-,+\right\}$}  }
\STATE Solve ECSDR$\left({\mathcal{D}^k_s}\right)$ for its solution $\left(\bx^{k}_{s},\bX^{k}_{s},\br^k_{s}\right)$ and its optimal value $L^{k}_{s}$. // \texttt{{Solve the children problems.}}
\STATE Compute {the feasible point} $\hat{\bx}^k_{s}=\mathrm{Scale}\left(\bx^{k}_{s},\br^k_{s}\right).$
\IF{$U^{*}>F\left(\hat{\bx}^{k}_{s}\right)$}
\STATE set $U^{*}=F\left(\hat{\bx}^{k}_{s}\right)$, $\bx^{*}=\hat{\bx}^{k}_{s}$. \label{line:upper1}// \texttt{Update Upper Bound and Optimal Solution.}
\ENDIF
\IF{$L^{k}_{s}< U^{*}$}
\STATE add $\left\{\mathcal{D}^k_{s},\bx^{k}_{s},\bX^k_{s},\br^k_{s},\hat{\bx}^k_{s},L^{k}_{s}\right\}$ into $\mathcal{P}$.
\ENDIF
\ENDFOR
%\STATE Solve ECSDR$\left({\mathcal{D}^k_+}\right)$ for its solution $\left(\bx^{k}_{+},\bX^{k}_{+},\br^k_{+}\right)$ and its optimal value $L^{k}_{+}$.
%\STATE Compute $\hat{\bx}^k_{+}=\mathrm{Scale}\left(\bx^k_{+},\br^k_{+}\right),$ where the operator $\mathrm{Scale}(\cdot,\cdot)$ is defined in \eqref{scale}.
%\IF{$U^{*}>F(\hat{\bx}^{k}_{+})$}
%\STATE set $U^{*}=F\left(\hat{\bx}^{k}_{+}\right)$, $\bx^{*}=\hat{\bx}^{k}_{+}$. \label{line:upper2}// \texttt{Update Upper Bound and Optimal Solution.}
%\ENDIF
%\IF{$L^{k}_{+}< U^{*}$}
%\STATE add $\left\{\mathcal{D}^k_{+},\bx^{k}_{+},\bX^k_{+},\br^k_{+},\hat{\bx}^k_{+},L^{k}_{+}\right\}$ into $\mathcal{P}$.
%\ENDIF
\ENDLOOP
\end{algorithmic}}
\label{alg_wp}
\end{figure}

%[[[\rev{In the algorithm we state it like $D_$}]]]
\subsection{Global Convergence and Worst-Case Iteration Complexity}

In this subsection, we present some theoretical results of our proposed ECSDR-BB algorithm.

The following Theorem \ref{thm-epssolution} shows that the sequence $\left\{L^k\right\}$ generated by the ECSDR-BB algorithm
is a lower bound of the optimal value of problem (CQP) and the solution $\bx^*$ returned by the algorithm is an $\epsilon$-optimal solution of the problem.
The proof of the theorem  can be found in Appendix \ref{appthmepssolution}. %(CQP).

%We show the proposed ECSDR-BB algorithm must terminate in a finite number of iterations. First, we prove the next result.

\begin{theorem}\label{thm-epssolution}
Let $\left\{\mathcal{D}^k,\bx^{k},\bX^k,\br^k,\hat{\bx}^k,L^{k}\right\}$ be the node selected in Line \ref{line:lowerbound} of the ECSDR-BB algorithm.
Then we have \begin{equation}\label{lowerupper}L^k\leq \nu^*\leq F\left(\hat{\bx}^k\right).\end{equation} Moreover, if \eqref{rgap} holds true, %$F\left(\hat{\bx}^k\right)- L^k\leq \epsilon$,
then $\bx^*$ returned by the algorithm is an $\epsilon$-optimal solution of problem (CQP).
\end{theorem}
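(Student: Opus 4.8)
The plan is to establish the two inequalities in \eqref{lowerupper} separately, and then deduce the $\epsilon$-optimality claim. For the lower bound $L^k \leq \nu^*$, the key invariant to maintain throughout the algorithm is that the collection of subsets $\mathcal{D}^k$ associated with the nodes currently in $\mathcal{P}$, together with the subsets that have already been pruned because their lower bound exceeded $U^*$, always forms a cover of the initial feasible set $\mathcal{D}^0$. First I would argue by induction on the iteration count: initially $\mathcal{P}$ contains the single node $\mathcal{D}^0$, so the cover property holds trivially; and whenever a node $\mathcal{D}^k$ is branched, it is replaced by $\mathcal{D}^k_- \cup \mathcal{D}^k_+ = \mathcal{D}^k$ (this is immediate from the Branching Rule, which splits an interval into two subintervals or a finite set into two complementary subsets), so the cover property is preserved. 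Since any feasible point of problem (CQP) lies in some leaf subset $\mathcal{D}^j$, and problem ECSDR$(\mathcal{D}^j)$ is a relaxation of problem (CQP) restricted to $\mathcal{D}^j$ (its feasible set contains all feasible $(\bx,\bx\bx^\dagger,\br)$ with polar data in $\mathcal{D}^j$, by Proposition 1 and the fact that $\mathcal{G}_{\mathcal{A}_i}$ and $\mathcal{F}_{\mathcal{B}_i}$ are convex envelopes of the corresponding nonconvex sets), we get $L^j \leq \nu^*$ for every leaf $j$. The Node Selection Rule picks the node with the smallest $L^k$ among all active nodes, and pruned nodes have lower bounds exceeding $U^* \geq \nu^*$, so $L^k = \min_j L^j \leq \nu^*$.

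For the upper bound $\nu^* \leq F(\hat\bx^k)$, I would simply invoke the feasibility of the scaled point: as noted in the text right after \eqref{scale}, the point $\hat\bx = \mathrm{Scale}(\bar\bx,\bar\br)$ satisfies $|\hat x_i| = \bar r_i \in [\ell_i,u_i]$ and $\arg(\hat x_i) = \hat\theta_i \in \mathcal{A}_i$, hence $\hat\bx$ is feasible for problem (CQP). Since $\nu^*$ is the optimal (minimum) value, $\nu^* \leq F(\hat\bx^k)$. One small point to check here is that the relevant $\hat\theta_i$ is chosen from $\mathcal{A}_i^0 = \mathcal{A}_i$ (the original set), not the partitioned subset, so that $\hat\bx^k$ is genuinely feasible for the original problem; this follows from how $\mathrm{Scale}$ is defined, and I would state it explicitly.

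Finally, for the $\epsilon$-optimality conclusion: when the termination test \eqref{rgap} fires at some iteration $k$, we have $U^* - L^k \leq \epsilon$. By the lower-bound part just proved, $L^k \leq \nu^*$, so $U^* - \nu^* \leq U^* - L^k \leq \epsilon$. By construction, $U^* = F(\bx^*)$ where $\bx^*$ is the best feasible point found so far (each update in Lines \ref{line:upper1} and \ref{line:upper1} keeps $\bx^*$ feasible and $U^* = F(\bx^*)$), so $F(\bx^*) - \nu^* \leq \epsilon$, which is exactly the definition of an $\epsilon$-optimal solution (Definition \ref{esolution}).

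I expect the main obstacle to be the careful bookkeeping in the lower-bound argument: one must be precise that ``the minimum of $L^k$ over active nodes'' is genuinely a lower bound for $\nu^*$, which requires both the cover invariant and the observation that any node pruned via the test $L^k_s \geq U^*$ (Line \ref{line:upper1}) cannot contain a better-than-$U^*$ point — and $U^* \geq \nu^*$ throughout — so discarding it loses nothing. A secondary subtlety worth a sentence is that ECSDR$(\mathcal{D}^k)$ really is a relaxation of the restriction of (CQP) to the subregion indexed by $\mathcal{D}^k$: this uses that the convex envelopes $\mathcal{G}_{\mathcal{A}_i^k}$ and $\mathcal{F}_{\mathcal{B}_i^k}$ contain the respective nonconvex sets over the subregion, together with Proposition 1 applied with $\mathcal{A}_i$ replaced by $\mathcal{A}_i^k$ and $\mathcal{B}_i$ by $\mathcal{B}_i^k$.
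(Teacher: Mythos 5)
Your proposal is correct and follows essentially the same route as the paper's proof: the lower bound comes from the fact that the active (plus pruned) subregions cover $\mathcal{D}^0$ and each ECSDR$(\mathcal{D}^j)$ relaxes (CQP) restricted to $\mathcal{D}^j$, the upper bound comes from feasibility of the scaled point, and the $\epsilon$-optimality claim follows by combining these with the termination test and the invariant $U^*=F(\bx^*)$. You simply spell out the cover/pruning bookkeeping that the paper compresses into one sentence.
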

%\begin{proof}
%See Subsection C of the Appendix.
%\end{proof}

Next, we will estimate $F(\hat{\bx}^k)- L^k$ and show that \eqref{rgap} will be satisfied after a finite number of iterations.
%Based on Theorem 2, we know that if $F(\hat{\bx}^k)- L^k\leq \epsilon,$ then ECSDR-BB algorithm terminates.
%Now we estimate the difference $F(\hat{\bx}^k)- L^k$.
Define
\begin{equation}\label{umax}
u_{\mathrm{max}}=\max\{u_1,u_2,\ldots,u_n\}
\end{equation}
%and let $i_1^*, i_2^*, S_1^*, S_2^*$ be defined as in \eqref{eq18}.
and the bounded set
\begin{equation}
\mathcal{X}=\left\{\bx\mid|x_i|\leq u_{\max},~i=1,2,\ldots,n\right\}.
\end{equation}
Since $F(\bx)$ is uniformly continuous over the bounded set $\mathcal{X}$, there must exist a constant $M_F>0$ such that
\begin{equation}\label{eq21}
\left|F(\bx)-F(\bx')\right|\leq M_F \left\|\bx-\bx'\right\|_2, \forall~\bx,\bx'\in \mathcal{X}.
\end{equation}
The next lemma gives an upper bound on $F(\hat{\bx}^k)- L^k$, whose proof is relegated to Appendix \ref{applemmabound}.

\begin{lemma}\label{lemmabound}
Let
\begin{equation}\label{eq27}
M_1= \sqrt{n}M_F+   n^{\frac{3}{2}}u_{\max} \left\|\bQ\right\|_{\mathrm{F}}
\end{equation}
and
\begin{equation}\label{eq27c}
M_2= \frac{1}{2} n^{\frac{3}{2}} \left\|\bQ\right\|_{\mathrm{F}},
\end{equation}
where $M_F$ is given in \eqref{eq21}.
%If  $\bar{\theta}^k_{i_1^*}-\underline{\theta}^k_{i_1^*}\leq \pi$, then
Then, we have
\begin{equation}\label{eq20}
F(\hat{\bx}^k)-L^k\leq  M_1 S_1^*+M_2 S_2^*,
\end{equation} where $S_1^*$ and $S_2^*$ are defined in \eqref{eq18}.
%where $\bar{\theta}^k_{i_1^*}=\max \mathcal{A}^k_{i_1^*}$ and $\underline{\theta}^k_{i_1^*}=\min \mathcal{A}^k_{i_1^*}$.
\end{lemma}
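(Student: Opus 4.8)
The plan is to bound $F(\hat{\bx}^k) - L^k$ by inserting the intermediate quantity $\tfrac12\bQ\bullet\bX^k + \mathrm{Re}(\bc^\dagger\bx^k) = L^k$ and comparing both $F(\hat{\bx}^k)$ and $L^k$ against $F(\bx^k)$, the objective of (CQP) evaluated at the relaxation point $\bx^k$. Concretely, I would write
\[
F(\hat{\bx}^k) - L^k = \bigl(F(\hat{\bx}^k) - F(\bx^k)\bigr) + \bigl(F(\bx^k) - L^k\bigr),
\]
and estimate the two parenthesized terms separately. Note that all points involved — $\bx^k$, $\hat{\bx}^k$ — lie in the bounded set $\mathcal{X}$, since $|x^k_i| \le r^k_i \le u_i \le u_{\max}$ (from $(x^k_i,r^k_i)\in\mathcal{G}_{\mathcal{A}_i}$ and $r^k_i\le u_i$) and $|\hat{x}^k_i| = r^k_i \le u_{\max}$, so \eqref{eq21} applies.

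For the first term, I would use the Lipschitz-type bound \eqref{eq21}: $|F(\hat{\bx}^k) - F(\bx^k)| \le M_F \|\hat{\bx}^k - \bx^k\|_2$. Then $\|\hat{\bx}^k - \bx^k\|_2^2 = \sum_{i=1}^n |\hat{x}^k_i - x^k_i|^2 \le n (S_1^*)^2$ by the definition of $S_1^*$ in \eqref{eq18}, so this term contributes at most $\sqrt{n}\,M_F\,S_1^*$, which accounts for the $\sqrt{n}M_F$ part of $M_1$.

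For the second term, since $L^k = \tfrac12 \bQ\bullet\bX^k + \mathrm{Re}(\bc^\dagger\bx^k)$ and $F(\bx^k) = \tfrac12 (\bx^k)^\dagger\bQ\bx^k + \mathrm{Re}(\bc^\dagger\bx^k) = \tfrac12\bQ\bullet(\bx^k(\bx^k)^\dagger) + \mathrm{Re}(\bc^\dagger\bx^k)$, the linear terms cancel and
\[
F(\bx^k) - L^k = \tfrac12\,\bQ\bullet\bigl(\bx^k(\bx^k)^\dagger - \bX^k\bigr).
\]
I would bound $|\bQ\bullet(\bx^k(\bx^k)^\dagger - \bX^k)| \le \|\bQ\|_{\mathrm{F}} \|\bX^k - \bx^k(\bx^k)^\dagger\|_{\mathrm{F}}$ by Cauchy--Schwarz for the $\bullet$ inner product, and then control $\|\bX^k - \bx^k(\bx^k)^\dagger\|_{\mathrm{F}}$. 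The matrix $\bX^k - \bx^k(\bx^k)^\dagger$ is positive semidefinite (from $\bX^k\succeq\bx^k(\bx^k)^\dagger$), so its Frobenius norm is at most its trace, $\sum_{i=1}^n \bigl(X^k_{ii} - |x^k_i|^2\bigr)$. Each summand splits as $(X^k_{ii} - (r^k_i)^2) + ((r^k_i)^2 - |x^k_i|^2)$; the first part is at most $S_2^*$, and the second is $(r^k_i - |x^k_i|)(r^k_i + |x^k_i|) \le S_1^* \cdot 2u_{\max}$ using $r^k_i - |x^k_i| \le |x^k_i - r^k_i e^{\mathsf{i}\arg x^k_i}|\le\ldots$ — more directly, $r^k_i - |x^k_i| \le |\hat{x}^k_i - x^k_i| = |r^k_i e^{\mathsf{i}\hat\theta_i} - x^k_i|$? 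This needs a small argument: since $|\hat{x}^k_i| = r^k_i$, we have $|\hat{x}^k_i - x^k_i| \ge \bigl||\hat{x}^k_i| - |x^k_i|\bigr| = r^k_i - |x^k_i|$, so indeed $r^k_i - |x^k_i| \le S_1^*$. Hence $\sum_i (X^k_{ii} - |x^k_i|^2) \le n S_2^* + 2n u_{\max} S_1^*$, giving $|F(\bx^k) - L^k| \le \tfrac12 \|\bQ\|_{\mathrm{F}} (n S_2^* + 2n u_{\max} S_1^*)$. Combining with the first term and collecting coefficients of $S_1^*$ and $S_2^*$ yields the constants $M_1 = \sqrt{n}M_F + n^{3/2} u_{\max}\|\bQ\|_{\mathrm{F}}$ and $M_2 = \tfrac12 n^{3/2}\|\bQ\|_{\mathrm{F}}$ (after absorbing an apparent $n$ versus $n^{3/2}$ discrepancy via $\|\bX^k - \bx^k(\bx^k)^\dagger\|_{\mathrm{F}} \le \sqrt{n}\,\mathrm{Trace}(\cdot)$ if one prefers the norm bound over the trace bound, or via a $\sqrt n$ slack in Cauchy--Schwarz).

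The main obstacle I anticipate is getting the powers of $n$ to match exactly: the naive bound $\|M\|_{\mathrm{F}} \le \mathrm{Trace}(M)$ for PSD $M$ gives an $n$, not $n^{3/2}$, so to land on $M_2 = \tfrac12 n^{3/2}\|\bQ\|_{\mathrm{F}}$ I must be slightly less tight somewhere — either bound $\|M\|_{\mathrm{F}} \le \sqrt{n}\max_i M_{ii} \le \sqrt{n}\sum_i M_{ii}$ is too weak, or more likely bound $|x^k_i - \hat x^k_i|^2$ and the diagonal gaps uniformly by $\max$ and multiply by $n$, then use $\|M\|_{\mathrm F}\le\mathrm{Trace}(M)$. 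The cleanest route: $\|\bX^k - \bx^k(\bx^k)^\dagger\|_{\mathrm F} \le \mathrm{Trace}(\bX^k - \bx^k(\bx^k)^\dagger) = \sum_i(X^k_{ii}-|x^k_i|^2) \le n(S_2^* + 2u_{\max}S_1^*)$, then $|F(\bx^k)-L^k|\le \tfrac12\|\bQ\|_{\mathrm F}\cdot n(S_2^*+2u_{\max}S_1^*)$, which gives $M_2 = \tfrac12 n\|\bQ\|_{\mathrm F}$ — weaker than claimed, so the paper must intend a different (looser) chain, and I would follow whichever inequality the authors use to reach exactly $n^{3/2}$; the qualitative structure of the proof is unaffected. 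The genuinely load-bearing facts are just: (i) feasibility of $\bx^k,\hat{\bx}^k$ in $\mathcal{X}$, (ii) cancellation of the linear term in $F(\bx^k)-L^k$, (iii) PSD-ness of $\bX^k - \bx^k(\bx^k)^\dagger$ to pass from Frobenius norm to trace, and (iv) the elementary inequality $r^k_i - |x^k_i| \le |\hat x^k_i - x^k_i| \le S_1^*$.
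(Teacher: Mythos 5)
Your proof is correct and follows essentially the same route as the paper's: the same splitting $F(\hat{\bx}^k)-L^k \le |F(\hat{\bx}^k)-F(\bx^k)| + |F(\bx^k)-L^k|$, the Lipschitz bound giving $\sqrt{n}M_F S_1^*$, cancellation of the linear term, Cauchy--Schwarz for the $\bullet$ product, and the trace estimate $\sum_i\bigl(X^k_{ii}-|x^k_i|^2\bigr)\le n\bigl(S_2^*+2u_{\max}S_1^*\bigr)$ via $r^k_i-|x^k_i|\le|\hat{x}^k_i-x^k_i|\le S_1^*$. The $n^{3/2}$ you were puzzled by comes from the paper's looser chain $\bigl\|\bX^k-\bx^k(\bx^k)^\dagger\bigr\|_{\mathrm{F}}\le\sqrt{n}\,\lambda_{\max}\le\sqrt{n}\,\mathrm{Trace}\bigl(\bX^k-\bx^k(\bx^k)^\dagger\bigr)$; your sharper bound $\|\cdot\|_{\mathrm{F}}\le\mathrm{Trace}(\cdot)$ for PSD matrices yields the smaller constants $M_1=\sqrt{n}M_F+n\,u_{\max}\|\bQ\|_{\mathrm{F}}$ and $M_2=\tfrac12 n\|\bQ\|_{\mathrm{F}}$, which, since $S_1^*,S_2^*\ge 0$, immediately implies the stated inequality.
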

%\begin{proof}
%See Subsection D of the Appendix.
%\end{proof}

Based on Lemma \ref{lemmabound}, we can show the following result and we relegate its proof to Appendix \ref{applemmaterminate}.

\begin{lemma}\label{lemmaterminate}
Let
\begin{equation}\label{eq32}
\kappa_1=\left[\frac{8  \epsilon}{u_{\max}(M_1+M_2)}\right]^{\frac{1}{2}}
\end{equation}
and
\begin{equation}\label{eq35}
\kappa_2=\left[\frac{4\epsilon}{M_1+M_2}\right]^{\frac{1}{2}},
\end{equation}
where $M_1$ and $M_2$ are defined in \eqref{eq27} and \eqref{eq27c}, respectively.
If one of the following three conditions is satisfied:\\
(\textrm{C1}) $S_1^*\geq S_2^*$, $\mathcal{A}^k_{i_1^*}=\left[\underline{\theta}^k_{i_1^*},\bar{\theta}^k_{i_1^*}\right]$, and $\bar{\theta}^k_{i_1^*}-\underline{\theta}^k_{i_1^*}\leq \min\{\kappa_1,\pi\}$,\\[2pt]
(\textrm{C2}) $S_1^*\geq S_2^*$, and $\mathcal{A}^k_{i_1^*}$ is a singleton,\\[2pt]
(\textrm{C3}) $S_1^*< S_2^*$, $\mathcal{B}^k_{i_2^*}=\left[\ell^k_{i_2^*},u^k_{i_2^*}\right]$, and $u^k_{i_2^*}-\ell^k_{i_2^*}\leq \kappa_2$,\\[2pt]
then \eqref{rgap} is satisfied and thus the ECSDR-BB algorithm terminates in Line \ref{line:terminate}.
\end{lemma}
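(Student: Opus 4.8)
The statement of Lemma~\ref{lemmaterminate} asserts that each of three geometric ``smallness'' conditions on the set that was branched forces the termination inequality \eqref{rgap}, namely $U^*-L^k\le\epsilon$. Since $U^*\le F(\hat{\bx}^k)$ always, it suffices to show $F(\hat{\bx}^k)-L^k\le\epsilon$ under each of (C1), (C2), (C3). The natural tool is Lemma~\ref{lemmabound}, which gives $F(\hat{\bx}^k)-L^k\le M_1 S_1^*+M_2 S_2^*$ with $S_1^*,S_2^*$ as in \eqref{eq18}. So the plan is: in each case, bound $S_1^*$ and $S_2^*$ separately using Proposition~\ref{lemma-width} applied to the \emph{branched} sets, plug into \eqref{eq20}, and check the resulting quantity is $\le\epsilon$ by the definitions \eqref{eq32}--\eqref{eq35} of $\kappa_1,\kappa_2$.

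\textbf{Case (C1).} Here $S_1^*\ge S_2^*$ and $\mathcal{A}^k_{i_1^*}=[\underline{\theta}^k_{i_1^*},\bar\theta^k_{i_1^*}]$ with $\bar\theta^k_{i_1^*}-\underline{\theta}^k_{i_1^*}\le\min\{\kappa_1,\pi\}$. By \eqref{eq18}, $S_1^*=\max_i|\hat{x}^k_i-x^k_i|$. For the maximizing index $i=i_1^*$ I would use that $\hat{x}^k_{i_1^*}$ has the same modulus $\bar r^k_{i_1^*}$ as the radial variable, while $(x^k_{i_1^*},r^k_{i_1^*})\in\mathcal{G}_{\mathcal{A}^k_{i_1^*}}$, so by \eqref{eq10} in Proposition~\ref{lemma-width}, $r^k_{i_1^*}\ge|x^k_{i_1^*}|\ge r^k_{i_1^*}\cos\!\big(\tfrac{\bar\theta^k_{i_1^*}-\underline{\theta}^k_{i_1^*}}{2}\big)$; combined with an argument bound (the angle of $x^k_{i_1^*}$ lies within the width of $\mathcal{A}^k_{i_1^*}$ of the chosen $\hat\theta_{i_1^*}$) this controls $|\hat{x}^k_{i_1^*}-x^k_{i_1^*}|$ by roughly $u_{\max}$ times the chord length $2\sin\!\big(\tfrac{\bar\theta^k_{i_1^*}-\underline{\theta}^k_{i_1^*}}{2}\big)$, hence by something of order $u_{\max}\,(\bar\theta^k_{i_1^*}-\underline{\theta}^k_{i_1^*})$. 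Using $\sin t\le t$ and the width bound $\le\kappa_1$ gives $S_1^*\lesssim u_{\max}\kappa_1$, and since $S_2^*\le S_1^*$ we get $F(\hat{\bx}^k)-L^k\le (M_1+M_2)S_1^*\le(M_1+M_2)\cdot\tfrac{u_{\max}}{?}\kappa_1^2$-type expression, which is exactly $\le\epsilon$ by the choice \eqref{eq32} of $\kappa_1$. The constants (the factor $8$, the $u_{\max}$) are precisely tuned so this works; tracking them is routine but must be done carefully.

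\textbf{Cases (C2) and (C3).} In (C2), $\mathcal{A}^k_{i_1^*}$ is a singleton $\{\theta\}$, so the slice argument in Section~\ref{sec:sdr} forces $x^k_{i_1^*}=r^k_{i_1^*}e^{\textsf{i}\theta}=\hat{x}^k_{i_1^*}$ (the convex envelope of a single ray is that ray), giving $|\hat{x}^k_{i_1^*}-x^k_{i_1^*}|=0$, hence $S_1^*=0$; with $S_2^*\le S_1^*=0$ we get $F(\hat{\bx}^k)-L^k\le 0\le\epsilon$ trivially. In (C3), $S_1^*<S_2^*$ and $\mathcal{B}^k_{i_2^*}=[\ell^k_{i_2^*},u^k_{i_2^*}]$ with $u^k_{i_2^*}-\ell^k_{i_2^*}\le\kappa_2$; apply \eqref{eq11} of Proposition~\ref{lemma-width} at $i=i_2^*$ to get $S_2^*=X^k_{i_2^*i_2^*}-(r^k_{i_2^*})^2\le\tfrac{(u^k_{i_2^*}-\ell^k_{i_2^*})^2}{4}\le\tfrac{\kappa_2^2}{4}$, and since $S_1^*<S_2^*$, $F(\hat{\bx}^k)-L^k\le(M_1+M_2)S_2^*\le(M_1+M_2)\tfrac{\kappa_2^2}{4}=\epsilon$ by \eqref{eq35}.

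\textbf{Main obstacle.} The only nontrivial step is Case (C1): one must convert the ``width of $\mathcal{A}^k_{i_1^*}$ is small'' hypothesis into a bound on the \emph{Euclidean} distance $|\hat{x}^k_{i_1^*}-x^k_{i_1^*}|$, which requires combining the modulus-defect bound \eqref{eq10} with an argument-defect bound, and then relating the chord length $2\sin(\cdot)$ to the arc length and finally matching the resulting constant to the somewhat mysterious $\tfrac{8\epsilon}{u_{\max}(M_1+M_2)}$ in \eqref{eq32}. Cases (C2) and (C3) are essentially immediate once Proposition~\ref{lemma-width} and Lemma~\ref{lemmabound} are invoked. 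I would therefore spend most of the write-up on (C1) and dispatch (C2), (C3) in a sentence or two each.
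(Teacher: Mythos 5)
Your overall strategy (reduce to $F(\hat{\bx}^k)-L^k\le\epsilon$ via Lemma~\ref{lemmabound}, then bound $S_1^*$ or $S_2^*$ using Proposition~\ref{lemma-width}) is the paper's strategy, and your treatments of (C2) and (C3) are correct and match the paper essentially verbatim. However, your Case (C1) has a genuine gap. You bound $|\hat{x}^k_{i_1^*}-x^k_{i_1^*}|$ by a chord-length estimate of order $u_{\max}\cdot 2\sin\bigl(\tfrac{\bar\theta^k_{i_1^*}-\underline{\theta}^k_{i_1^*}}{2}\bigr)$, i.e.\ you conclude $S_1^*\lesssim u_{\max}\kappa_1$, which is \emph{linear} in the width. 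Plugging a linear bound into $(M_1+M_2)S_1^*$ with $\kappa_1=\bigl[8\epsilon/(u_{\max}(M_1+M_2))\bigr]^{1/2}$ yields something of order $\sqrt{\epsilon\,u_{\max}(M_1+M_2)}$, which is not $\le\epsilon$; the jump from "$S_1^*\lesssim u_{\max}\kappa_1$" to a "$\kappa_1^2$-type expression" in your write-up is exactly where the argument breaks. The square root in the definition \eqref{eq32} of $\kappa_1$ tells you that the bound on $S_1^*$ must be \emph{quadratic} in the width, and a chord-length argument cannot deliver that.

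The missing idea is that in Case (C1) the scaling operation \eqref{scale} does not change the argument at all. Since $\mathcal{A}^k_{i_1^*}$ is an interval of width at most $\pi$ and $(x^k_{i_1^*},r^k_{i_1^*})\in\mathcal{G}_{\mathcal{A}^k_{i_1^*}}$, the envelope constraint $\alpha\,\mathrm{Re}(x)+\beta\,\mathrm{Im}(x)\ge\gamma r$ together with $|x|\le r$ forces $\arg\bigl(x^k_{i_1^*}\bigr)\in\mathcal{A}^k_{i_1^*}$, so $\hat{\theta}_{i_1^*}=\arg\bigl(x^k_{i_1^*}\bigr)$ and $\hat{x}^k_{i_1^*}=r^k_{i_1^*}e^{\textsf{i}\arg(x^k_{i_1^*})}$. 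Hence the error is purely radial: $S_1^*=\bigl|\hat{x}^k_{i_1^*}-x^k_{i_1^*}\bigr|=r^k_{i_1^*}-\bigl|x^k_{i_1^*}\bigr|$, and \eqref{eq10} gives $r^k_{i_1^*}-\bigl|x^k_{i_1^*}\bigr|\le r^k_{i_1^*}\bigl[1-\cos\bigl(\tfrac{\bar\theta^k_{i_1^*}-\underline{\theta}^k_{i_1^*}}{2}\bigr)\bigr]\le u_{\max}\,\bigl(\bar\theta^k_{i_1^*}-\underline{\theta}^k_{i_1^*}\bigr)^2/8\le u_{\max}\kappa_1^2/8=\epsilon/(M_1+M_2)$, using $1-\cos\theta\le\theta^2/2$. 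This quadratic dependence is what makes the constant $8$ in \eqref{eq32} close the argument; your version, as written, does not.
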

%\begin{proof}
%See Subsection E of the Appendix.
%\end{proof}

Based on the above two lemmas, we obtain the main result of this subsection, which shows that the ECSDR-BB algorithm will terminate within a finite number of iterations in \eqref{eq36}.
%prove the next theorem to show the global convergence of ECSDR-BB algorithm.

\begin{theorem}\label{thm-iteration}
For any given error tolerance $\epsilon >0$ and any given instance of problem (CQP), the ECSDR-BB algorithm will return an $\epsilon$-optimal solution of the given instance within at most
\begin{equation}\label{eq36}
\displaystyle {K}:=\prod_{i=1}^n  \left[\mu(\mathcal{A}_i) \times  \max\left\{\left\lceil \frac{2\omega(\mathcal{B}_i)}{\kappa_2}\right\rceil,1\right\}\right]
\end{equation}
iterations, where
\begin{equation}\label{muA}
  \mu(\mathcal{A}_i)=\left\{\begin{array}{ll}
       \max\left\{\left\lceil \frac{2\omega(\mathcal{A}_i)}{\min\{\kappa_1,\pi\}}\right\rceil,1\right\},~\text{if~$\mathcal{A}_i$~is~an~interval};\\[10pt]
       \left|\mathcal{A}_i\right|,~\text{if~$\mathcal{A}_i$~is~a~finitely~discrete~set},
    \end{array}\right.
  \end{equation} and $\kappa_1$ and $\kappa_2$ are the constants defined in \eqref{eq32} and \eqref{eq35}, respectively.

%~with~a~finite~number~of~elements
%$N(\mathcal{A}_i)$ is equal to
%\begin{equation}\label{NAI}
%
%\end{equation}
%if $\mathcal{A}_i$ is an interval and the cardinality of the set if $\mathcal{A}_i$ is a discrete set with a finite number of elements,
%$N(\mathcal{A}_i)=\left|\mathcal{A}_i\right|$ equals
%
%let
\end{theorem}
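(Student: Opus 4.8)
The plan is to show that the branch-and-bound tree has depth at most the quantity bounded inside the product defining $K$, by tracking how many times each coordinate set $\mathcal{A}_i$ and $\mathcal{B}_i$ can be subdivided before Lemma~\ref{lemmaterminate} forces the termination test \eqref{rgap} to pass. First I would observe that every node of the tree is labeled by a set $\mathcal{D}^k=\prod_i \mathcal{A}_i^k\times\prod_i\mathcal{B}_i^k$ obtained from $\mathcal{D}^0$ by repeated bisection of individual coordinates, so the collection of leaves is determined entirely by how finely each of the $2n$ coordinate sets is partitioned. Because the branching rule in \eqref{eq18} always splits the \emph{one} coordinate responsible for the larger of $S_1^*$ and $S_2^*$, I would argue that a node cannot be branched (it must instead be a leaf) once every $\mathcal{A}_i^k$ and every $\mathcal{B}_i^k$ is small enough to trigger one of the conditions (C1)--(C3): if $S_1^*\ge S_2^*$ then the selected $\mathcal{A}^k_{i_1^*}$ is either a singleton (C2) or a short interval (C1), and if $S_1^*<S_2^*$ then the selected $\mathcal{B}^k_{i_2^*}$ is a short interval (C3); in all three cases Lemma~\ref{lemmaterminate} gives $U^*-L^k\le\epsilon$ and the algorithm stops rather than branching that node.

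Next I would count the partition pieces per coordinate. For an interval $\mathcal{A}_i$ of width $\omega(\mathcal{A}_i)$, each bisection halves the width of the affected subinterval, so after generating at most $\lceil 2\omega(\mathcal{A}_i)/\min\{\kappa_1,\pi\}\rceil$ pieces every surviving subinterval has width at most $\min\{\kappa_1,\pi\}$, hence (C1) applies whenever that coordinate is chosen; this yields the first branch of \eqref{muA}, with the $\max\{\cdot,1\}$ covering the degenerate case $\omega(\mathcal{A}_i)=0$. If $\mathcal{A}_i$ is a finite discrete set, the discrete bisection rule splits it at the midpoint of $\min$ and $\max$, so it can be subdivided only until each piece is a singleton, giving at most $|\mathcal{A}_i|$ pieces and triggering (C2); this is the second branch of \eqref{muA}. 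Identically, the interval $\mathcal{B}_i$ produces at most $\lceil 2\omega(\mathcal{B}_i)/\kappa_2\rceil$ pieces before each surviving subinterval has width $\le\kappa_2$ and (C3) applies; the $\max\{\cdot,1\}$ again handles $\ell_i=u_i$. Multiplying the per-coordinate bounds, the number of distinct leaf sets $\mathcal{D}^k$ is at most $K$ as in \eqref{eq36}, and since each iteration of the main loop removes one node from $\mathcal{P}$ and processes it, the total iteration count is bounded by the number of nodes ever created, which is at most the number of leaves $K$ (a binary tree all of whose branchings are "productive" in the sense that both children are needed). Combining with Theorem~\ref{thm-epssolution}, which guarantees that whatever $\bx^*$ is returned satisfies $F(\bx^*)-\nu^*\le\epsilon$, completes the argument.

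The main obstacle I anticipate is making the leaf-counting argument fully rigorous when the branching rule only ever refines \emph{one} coordinate at a time and the choice depends on the current relaxation solution, not on a fixed schedule: I must rule out an adversarial sequence of choices that keeps branching some coordinate whose set is already "small enough" while leaving another coordinate coarse. The clean way around this is the observation in the previous paragraph: the \emph{selected} coordinate is always the one attaining the maximum in \eqref{eq18}, so if that coordinate's set is already below the threshold in (C1)/(C2)/(C3), Lemma~\ref{lemmaterminate} fires and the node is not branched at all; therefore along any root-to-leaf path, each coordinate is split at most the per-coordinate number of times computed above, and the depth (hence the number of leaves) is controlled by the product. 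A secondary, lighter point is bookkeeping the $\max\{\cdot,1\}$ and ceiling terms so the bound is stated cleanly for the degenerate cases $\omega(\mathcal{A}_i)=0$, $\ell_i=u_i$, and $|\mathcal{A}_i|=1$; these are routine once the main counting is in place.
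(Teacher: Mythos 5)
Your proposal is correct and follows essentially the same route as the paper's proof: use Lemma~\ref{lemmaterminate} to show that a coordinate set already below its threshold cannot be the one selected for branching (else the algorithm terminates), deduce the per-coordinate piece counts $\mu(\mathcal{A}_i)$ and $\max\{\lceil 2\omega(\mathcal{B}_i)/\kappa_2\rceil,1\}$, and multiply them to bound the number of iterations by $K$. Your explicit handling of the ``adversarial branching order'' point and the leaf-versus-iteration count is a slightly more careful write-up of the same argument, not a different one.
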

%\begin{proof}
%See Subsection F of the Appendix.
%\end{proof}

%Hence, ECSDR-BB algorithm is guaranteed to terminate in a finite number of iterations, with an $\epsilon$-optimal solution being returned.
The proof of Theorem \ref{thm-iteration} can be found in Appendix \ref{appthmiteration}. Two remarks on Theorem \ref{thm-iteration} are in order. First, from Theorem \ref{thm-iteration}, we can obtain the global convergence of the ECSDR-BB algorithm, i.e., both the sequences of the upper bounds and the lower bounds generated by the algorithm with $\epsilon=0$ converge to the optimal value of problem (CQP). In practice, we need to preselect a positive error tolerance $\epsilon$ in our proposed algorithm, as in most of iterative optimization algorithms. Second, Theorem \ref{thm-iteration} shows that the total number of iterations $K$ in \eqref{eq36} for the ECSDR-BB algorithm to return an $\epsilon$-optimal solution of problem (CQP) is exponential with respect to the number of variables $n.$ The iteration complexity of our proposed algorithm seems high at first sight. However, as will be shown in Section \ref{sec:numerical}, its practical number of iterations is actually significantly less than the worst-case bound in \eqref{eq36}. It is also worth remarking that there is no polynomial time algorithm which can globally solve the problem (unless P=NP), because the problem is NP-hard.

%Theorem 3 also provides a worst-case complexity of the ECSDR-BB algorithm. However, this complexity result only has theoretical meaning. In our
%numerical experiments, we will demonstrate that the empirical complexity of ECSDR-BB algorithm can be much lower than the theoretical complexity.

\section{Numerical Simulations}\label{sec:numerical}

%To evaluate the effectiveness and efficiency of our proposed ECSDR-BB algorithm,

In this section, we present some numerical simulation results to demonstrate the tightness of
our proposed relaxation (ECSDR) and the efficiency of our proposed ECSDR-BB algorithm for
problem (CQP). We apply the ECSDR-BB algorithm to solve three optimization problems arising from signal processing applications
introduced in {Section \ref{sec:application}}, i.e., the MIMO detection problem \eqref{MIMO-Problem}, the unimodular radar code design problem \eqref{Radar-Problem}, and the virtual beamforming design problem \eqref{VB}. All of these three problems are special cases of problem (CQP) but they have different characteristics, e.g., the MIMO detection problem has discrete argument constraints {and unit-modulus constraints}; the unimodular radar code design problem has continuous argument constraints {(but not equal $[0,2\pi]$) and unit-modulus constraints}; and the virtual beamforming design problem has continue argument constraints {(and equal $[0,2\pi]$) and interval modulus constraints.} %(i.e., $\sqrt{P_i}=u_i>\ell_i=0$ for all $i$)}.
%has interval modulus constraints. %The main purpose of our simulations is the following two:
All of the experiments are implemented in MATLAB, with $\mathrm{SeDuMi}$ \cite{Sedumi} being used to solve semidefinite programs (SDPs).
{{All the algorithms are run on a PC with Intel Core i7-2600 (3.40GHz) and 4GB memory.}}
{In all experiments, the error tolerance of
the proposed ECSDR-BB algorithm is set to be $\epsilon=10^{-4}$.
%, and the error tolerance of SeDuMi is set to be $10^{-8}$.}}

%In this section, we evaluate the effectiveness of the proposed algorithms in two aspects. On one hand, we study
%the tightness of the enhanced relaxation (ECSDR). On the other hand, we test the efficiency of the ECSDR-BB algorithm.
%
%Our numerical experiments are carried out on three
%typical problems introduced in Section I-A, including the MIMO channel detection problem, the radar code design problem, and the
%virtual beamforming design problem. The above three problems have quite different structures:
%All experiments are implemented in MATLAB, using SeDuMi
%\cite{Sedumi} to solve semidefinite programming problems. The error tolerance in all
%experiments is set to $\epsilon=10^{-4}$.

\subsection{Numerical Results of MIMO Detection}\label{sec:numericalmimo}

We generate the instances of the MIMO detection problem \eqref{MIMO-Problem} as {in \cite{Chan,Damen}}: we first generate $\bH$ according to the standard complex Gaussian distribution; then we generate a complex vector $\bx^{\ast}$ with $\left|x_i^\ast\right|=1$ and $\arg\left(x^\ast_i\right)$ being uniformly chosen from the discrete set $\mathcal{A}_i$ for all $i=1,2,\ldots,n;$ finally we set $\br=\bH\bx^\ast+\sigma \bv,$ where $\bv\in\mathbb{C}^n$ is a Gaussian noise obeying the standard complex Gaussian distribution and $\sigma$ is a parameter which controls the SNR determined by $\text{SNR}=10\log_{10}\left({\|\bH\bx^\ast\|_2^2}/{\sigma^2n}\right).$ In our simulations, $\mathcal{A}_i$ is either $\left\{0, \pi/2,\pi,3\pi/2\right\}$ (i.e., QPSK) or $\left\{0, \pi/4, \pi/2,3\pi/4, \pi,5\pi/4,3\pi/2,7\pi/4\right\}$ (i.e., 8-PSK).

For each setup, we generate {$50$} problem instances and apply our proposed algorithm (i.e., Algorithm 1) to solve them. All results in this subsection are obtained by averaging over the {$50$} generated instances.
Numerical results are summarized in Tables I and II, where ``ObjVal'' denotes the objective value returned by the proposed ECSDR-BB algorithm; ``LBdE'' and ``LBdC'' denote
the lower bounds (i.e., the optimal objective values) returned by two relaxations (ECSDR) and (CSDR) (for problem \eqref{MIMO-Problem}), respectively;
``CldGap'' denotes (LBdE$-$LBdC)/(ObjVal$-$LBdC) $\times 100\%$, which measures how much of the gap in (CSDR) is closed by (ECSDR);  ``Time" and ``\# Iter"  denote the CPU time and the number of iterations of the ECSDR-BB algorithm for solving problem (CQP); and ``TimeE" and ``TimeC" denote the CPU time of solving relaxations (ECSDR) and (CSDR), respectively. Notice that ``CldGap'' defined in the above must be in $[0,1]$ due to the fact $\textrm{ObjVal} \geq \textrm{LBdE} \geq \textrm{LBdC}.$

%The objective values returned by the ECSDR-BB algorithm (i.e., $\nu(\textrm{BB})$), the (ECSDR) bound (i.e., Bound-E), and the (CSDR) bound (i.e., Bound-C) are listed in Table I.
%%Each row lists the average results over ten instances.
%To evaluate the effectiveness of the valid inequalities in (ECSDR), the column ``Closed" lists the percentage of gap being closed by
%(ECSDR), which is defined as
%$$\left| \frac{ \nu(\textrm{ECSDR})-\nu(\textrm{CSDR})}{\nu(\textrm{ECSDR-BB})-\nu(\textrm{CSDR})}\right|,$$
%where $\nu(\cdot)$ denotes the average objective value (over ten instances) obtained by the method $(\cdot)$.
%The average CPU time  and the number of iterations are listed in Table II, where
%%The columns  is the average number of iterations of the ECSDR-BB algorithm.

\begin{table}%[htb]
\centering
\caption{Objective values and lower bounds of MIMO detection problem \eqref{MIMO-Problem}}
\begin{tabular}{c|c|rrr|r}
\hline
%\multicolumn{3}{|c|}{Instance}   & \multicolumn{3}{c|}{ECSDR-BB}  & \multicolumn{3}{c|}{SDP Approximation}  \\
$(m,n,M)$ &SNR &ObjVal  &LBdE &LBdC &CldGap \\
\hline
(15,\,10,\,4) &25  &0.954 &0.954 &0.662 &100.0 \% \\
(15,\,10,\,4) &20  &3.118 &3.102 &2.124 &98.4 \% \\
(15,\,10,\,4) &15  &9.809 &9.575 &6.395 &93.1 \% \\
(15,\,10,\,4) &10  &30.093 &28.064 &21.124 &77.4 \% \\
(15,\,10,\,4) &5  &85.843 &72.824 &56.006 &56.4 \% \\
\hline
(15,\,10,\,8) &25  &0.960 &0.953 &0.669 &97.6 \% \\
(15,\,10,\,8) &20  &3.082 &2.972 &2.034 &89.6 \% \\
(15,\,10,\,8) &15  &9.712 &8.678 &6.605 &66.7 \% \\
(15,\,10,\,8) &10  &28.858 &24.333 &20.351 &46.8 \% \\
(15,\,10,\,8) &5  &78.708 &71.169 &65.251 &44.0 \% \\
\hline
(30,\,20,\,4) &25  &3.638 &3.638 &2.366 &100.0 \% \\
(30,\,20,\,4) &20  &12.489 &12.464 &8.376 &99.4 \% \\
(30,\,20,\,4) &15  &38.064 &36.335 &24.977 &86.8 \% \\
(30,\,20,\,4) &10  &119.411 &106.786 &79.083 &68.7 \% \\
(30,\,20,\,4) &5  &348.242 &288.539 &229.577 &49.7 \% \\
\hline
(30,\,20,\,8) &25  &3.851 &3.764 &2.544 &93.3 \% \\
(30,\,20,\,8) &20  &11.765 &11.084 &7.783 &82.9 \% \\
(30,\,20,\,8) &15  &37.940 &32.936 &26.065 &57.9 \% \\
(30,\,20,\,8) &10  &115.252 &91.510 &78.141 &36.0 \% \\
(30,\,20,\,8) &5  &285.485 &252.054 &236.121 &32.3 \% \\
\hline
\end{tabular}
\end{table}

\begin{table}%[htb]
\centering
\caption{CPU time (in Seconds) and number of iterations for solving MIMO detection problem \eqref{MIMO-Problem}}
\begin{tabular}{c|c|rr|rr}
\hline
%\multicolumn{3}{|c|}{Instance}   & \multicolumn{3}{c|}{ECSDR-BB}  & \multicolumn{3}{c|}{SDP Approximation}  \\
$(m,n,M)$ &SNR &\# Iter &Time &TimeE &TimeC \\
\hline
(15,\,10,\,4) &25 &1.0 &0.09  &0.09  &0.07\\
(15,\,10,\,4) &20 &1.3 &0.14  &0.09  &0.07\\
(15,\,10,\,4) &15 &2.3 &0.29  &0.08  &0.07\\
(15,\,10,\,4) &10 &3.8 &0.53  &0.08  &0.07\\
(15,\,10,\,4) &5 &9.8 &1.36  &0.07  &0.07\\
\hline
(15,\,10,\,8) &25 &1.6 &0.23  &0.11  &0.07\\
(15,\,10,\,8) &20 &3.1 &0.47  &0.10  &0.07\\
(15,\,10,\,8) &15 &6.5 &1.04  &0.09  &0.07\\
(15,\,10,\,8) &10 &13.3 &2.08  &0.09  &0.07\\
(15,\,10,\,8) &5 &23.1 &3.49  &0.08  &0.06\\
\hline
(30,\,20,\,4) &25 &1.0 &0.18  &0.18  &0.12\\
(30,\,20,\,4) &20 &1.4 &0.30  &0.18  &0.12\\
(30,\,20,\,4) &15 &4.0 &1.03  &0.15  &0.12\\
(30,\,20,\,4) &10 &8.2 &2.14  &0.15  &0.12\\
(30,\,20,\,4) &5 &40.1 &9.25  &0.13  &0.11\\
\hline
(30,\,20,\,8) &25 &3.1 &0.96  &0.19  &0.12\\
(30,\,20,\,8) &20 &7.3 &2.32  &0.18  &0.11\\
(30,\,20,\,8) &15 &15.2 &4.72  &0.18  &0.12\\
(30,\,20,\,8) &10 &77.8 &21.02  &0.15  &0.11\\
(30,\,20,\,8) &5 &161.3 &42.36  &0.15  &0.11\\
\hline
\end{tabular}
\end{table}

We first compare the two relaxations (ECSDR) and (CSDR) in terms of their tightness and computational efficiency.
We can see from Table I that relaxation (ECSDR)
is generally much tighter than (CSDR). In particular, in cases of $M=4$ and $\text{SNR}=25$, our proposed relaxation (ECSDR)
is exact, %\footnote{\rev{This phenomenon can be explained by the analysis results in \cite{lu2019tightness}, which show that (ECSDR), when applied to the MIMO detection problem, will be tight with overwhelming high probability if the channel matrix and the noise obey the Gaussian distribution and the SNR is sufficiently high; please see \cite[Theorems 4.4 and 4.5]{lu2019tightness} for more details.}},
i.e., the optimal solution and the optimal objective value of relaxation (ECSDR) are equal to that of the original problem; %and the solution to (ECSDR) is also a solution of the original problem;
in cases of SNR$\geq 15$, our proposed relaxation (ECSDR) narrows down over $50\%$ of the gap due to (CSDR);
and in all cases, LBdE is generally larger than LBdC and more than $30\%$ of the gap in (CSDR) is closed by (ECSDR). These results clearly show
that the new envelope constraints added in (ECSDR) (as compared to (CSDR)) are indeed very useful to reduce the relaxation gap of (CSDR).
From Table II, we can observe that the two relaxations have similar computational efficiency. %The
%(average) CPU time of solving (CSDR) is about 0.06--0.12 seconds and the one of solving (ECSDR)
%is about {0.07--0.19} seconds.
In fact, since the number of constraints in (ECSDR) is larger than that of  (CSDR), %with more convex constraints being added,
the CPU time of solving (ECSDR)
is generally larger than that of solving (CSDR). However, the CPU time of solving (ECSDR) is not much larger than that of solving (CSDR). This is because the constraints added in (ECSDR) (compared to (CSDR)) are all ``simple'' linear constraints. Based on the above analysis and numerical results, we conclude that (ECSDR) generally is much tighter
than (CSDR) but solving (ECSDR) takes only slightly more CPU time than solving (CSDR).
%the two relaxations has comparable computational efficiency.

%the CPU time of solving (ECSDR) is not much larger than
% only needs slightly more computational time than (CSDR).
%
%All above actually can
%
%(less than two times).
%, which can be seen from Table

%Second, we evaluate the efficiency of ECSDR-BB algorithm.
We can also see from Table II that our proposed ECSDR-BB algorithm can solve all problem instances within
{$162$ iterations and within $43$ seconds} (on average). %In particular, our proposed algorithm
%terminates within $8$ iterations for problem instances with their SNRs being larger than or equal to $20$dB; and for problem instances with their SNRs being $25$dB and $M=4$,
%our proposed algorithm terminates within only $1$ iteration (which implies that our proposed relaxation (ECSDR) is tight in this case).
These results show that our proposed ECSDR-BB algorithm is very efficient for globally solving the MIMO detection problem. We will further compare the efficiency of our proposed ECSDR-BB algorithm with a specially designed algorithm for solving the MIMO detection problem in Section IV-D.

%[[[\rev{Might be better to mention the tightness result in our paper \cite{lu2017tightness} submitted to SIAM somewhere in this subsection. Let me know your idea on this!}]]]

\subsection{Numerical Results of Unimodular Radar Code Design}
We generate the instances of the unimodular radar code design problem \eqref{Radar-Problem} as {in \cite{Maio2009}}: %(see \cite{Maio2009}):
%$\bQ=\bM^{-1}\odot (\bp \bp^\dag)^*$, where
we set each entry of $\bM$ to be $M_{ij}=\rho^{|i-j|}$ with $\rho\in [0.2,0.8];$ set $N=7$ and $\bx^0=[1,1,1,-1,-1,1,-1]^\T$ (i.e., the Barker code of length $7$);
set the steering vector $\bp=\left[1,e^{\textbf{i}2\pi f_dT_r},\ldots,e^{\textbf{i}2\pi (N-1) f_dT_r}\right]^\T$ with $f_dT_r=0.15$; and set
the similarity tolerance $\delta$ in \eqref{Radar-Problem} such that
$\arccos(1-\delta^2/2)$ is equal to either $\pi/6$ or $\pi/3$. %, respectively.
%As in \cite{Maio2009}, we generate 7-dimensional test instances, with $c^0=[1,1,1,-1,-1,1,-1]^\T$ being .
%The value $\arccos(1-\epsilon^2/2)$ is set to $\pi/6$ and $\pi/3$, respectively.

For each value of the parameter $\delta$, we generate $5$ problem instances and apply our proposed algorithm (i.e., Algorithm 1) to solve them. %All results in this subsection are obtained by averaging over $10$ instances.
Numerical results on all $10$ problem instances are summarized in Tables III and IV, where ``ID'' denotes the IDs of the corresponding problem instances, $\omega(\mathcal{A}_{i})$ denotes the width of set $\mathcal{A}_{i}$ (cf. \eqref{WAi}), ``UBdE'' and ``UBdC'' denote
the upper bounds\footnote{Recall that problem \eqref{Radar-Problem} is a maximization problem and thus the optimal values of the corresponding relaxations are upper bounds of its optimal value.} (i.e., the optimal objective values) returned by two relaxations (ECSDR) and (CSDR) (for problem \eqref{Radar-Problem}), respectively,
``CldGap'' denotes (UBdC$-$UBdE)/(UBdC$-$ObjVal) $\times 100\%$, and all the others have the same meanings as that in Tables I and II. By the definition of $\omega(\mathcal{A}_{i})$ (cf. \eqref{WAi}), we have $\omega(\mathcal{A}_{i})=\pi/3$ if $\arccos(1-\delta^2/2)=\pi/6$ and $\omega(\mathcal{A}_{i})=2\pi/3$ if $\arccos(1-\delta^2/2)=\pi/3.$

%ten are for the case $\arccos(1-\epsilon^2/2)=\pi/6$, and the other ten are for the case $\arccos(1-\epsilon^2/2)=\pi/3$.
%Similar to the experiments in Section IV-A, we solve all the test
%instances by using ECSDR-BB algorithm to obtain their optimal values, and then compute the (ECSDR) and (CSDR) based lower bounds
%for comparison. The objective values computed by different algorithms are listed in Table III, and the
%CPU time of different algorithms are listed in Table IV, in which the column ``Time 1", ``Time 2", ``Time 3" and ``Enum"
%have the same meanings as those in Table II.

\begin{table}
\centering
\caption{Objective values and upper bounds of unimodular radar code design problem \eqref{Radar-Problem}}
\begin{tabular}{c|c|c|cc|c}
%\toprule
\hline
ID &$\omega(\mathcal{A}_{i})$  &ObjVal & UBdE &UBdC &CldGap\\
\hline
1 &$\pi/3$ &73.76 &74.47 &140.93 &99.0\% \\
2 &$\pi/3$ &11.74 &11.88 &14.63 &95.0\% \\
3 &$\pi/3$ &10.73 &10.88 &13.97 &95.3\% \\
4 &$\pi/3$ &10.55 &10.67 &14.00 &96.4\% \\
5 &$\pi/3$ &12.23 &12.24 &14.06 &99.4\% \\
%6 &$\pi/3$ &48.08 &49.43 &104.77 &97.6\% \\
%7 &$\pi/3$ &11.35 &11.35 &13.75 &99.9\% \\
%8 &$\pi/3$ &15.26 &15.48 &22.08 &96.8\% \\
%9 &$\pi/3$ &16.88 &17.34 &25.18 &94.4\% \\
%10 &$\pi/3$ &14.80 &15.02 &23.13 &97.3\% \\
\hline
6 &$2\pi/3$ &20.55 &22.12 &24.13 &56.1\% \\
7 &$2\pi/3$ &35.02 &40.41 &48.88 &61.1\% \\
8 &$2\pi/3$ &12.30 &12.94 &13.76 &56.0\% \\
9 &$2\pi/3$ &16.91 &17.59 &19.09 &68.8\% \\
10 &$2\pi/3$ &22.00 &23.11 &25.02 &63.1\% \\
%16 &$2\pi/3$ &13.72 &14.35 &15.06 &53.6\% \\
%17 &$2\pi/3$ &18.08 &18.40 &18.54 &31.5\% \\
%18 &$2\pi/3$ &20.92 &22.42 &23.57 &43.4\% \\
%19 &$2\pi/3$ &14.74 &16.33 &17.56 &43.6\% \\
%20 &$2\pi/3$ &38.85 &42.79 &44.80 &33.7\% \\
\hline
\end{tabular}
\end{table}

%Note that the problem is to
%maximize to objective function, thus the bounds are upper bounds rather than lower bounds.

\begin{table}
\centering
\caption{CPU time (in Seconds) and number of iterations for solving unimodular radar code design problem \eqref{Radar-Problem}}
\begin{tabular}{c|c|rr|rr}
\hline
ID &$\omega(\mathcal{A}_{i})$   &\# Iter &Time  &TimeE &TimeC \\
\hline
1 &$\pi/3$ &19 &2.09 &0.05 &0.03 \\
2 &$\pi/3$ &12 &1.17 &0.05 &0.04 \\
3 &$\pi/3$ &11 &1.07 &0.05 &0.04 \\
4 &$\pi/3$ &6 &0.53 &0.05 &0.03 \\
5 &$\pi/3$ &2 &0.15 &0.05 &0.04 \\
%6 &$\pi/3$ &25 &2.75 &0.05 &0.03 \\
%7 &$\pi/3$ &2 &0.14 &0.05 &0.04 \\
%8 &$\pi/3$ &4 &0.36 &0.05 &0.03 \\
%9 &$\pi/3$ &18 &1.75 &0.05 &0.04 \\
%10 &$\pi/3$ &3 &0.26 &0.05 &0.04 \\
\hline
6 &$2\pi/3$ &24 &2.45 &0.05 &0.03 \\
7 &$2\pi/3$ &8 &0.78 &0.05 &0.04 \\
8 &$2\pi/3$ &20 &1.93 &0.05 &0.03 \\
9 &$2\pi/3$ &9 &0.83 &0.05 &0.03 \\
10 &$2\pi/3$ &11 &1.06 &0.05 &0.04 \\
%16 &$2\pi/3$ &7 &0.64 &0.05 &0.04 \\
%17 &$2\pi/3$ &15 &1.48 &0.05 &0.04 \\
%18 &$2\pi/3$ &17 &1.68 &0.05 &0.04 \\
%19 &$2\pi/3$ &31 &3.09 &0.05 &0.04 \\
%20 &$2\pi/3$ &24 &2.40 &0.05 &0.04 \\
\hline
\end{tabular}
\end{table}

We can observe and conclude from the results listed in Tables III and IV that:
\begin{itemize}
  \item [1)] (ECSDR) is generally much tighter than (CSDR), especially when
$\omega(\mathcal{A}_{i})$ is small. %In particular, when $\omega(\mathcal{A}_{i})=\pi/3$, more than $95\%$ of the gap
%in (CSDR) is closed by (ECSDR), while when $\omega(\mathcal{A}_{i})=2\pi/3$, $50\%$--$70\%$ of the gap is closed.
This is consistent with the analysis in Proposition~\ref{lemma-width}: since $r_i=1$ in the unimodular radar code design problem \eqref{Radar-Problem}, then
the inequality in \eqref{eq10} reduces to $1\geq \left|x_i\right|\geq \cos\left({\omega(\mathcal{A}_{i})}/{2}\right),$
%%can be interpreted by Lemma 1 as follows:
%the inequality $$r_i\geq \left|x_i\right|\geq r_i\cos\left({\omega(\mathcal{A}_{i})}/{2}\right),$$
%where ,
which shows that a smaller $\omega(\mathcal{A}_{i})$ generally leads to a smaller gap $1 -\left|x_i\right|.$
\item [2)]  Solving (ECSDR) takes slightly more CPU time than solving (CSDR).
\item [3)] Our proposed ECSDR-BB algorithm is able to efficiently solve all generated problem instances
within satisfactory computational time (i.e., less than {$3$} seconds) and within a relatively small number of iterations (i.e., less than $24$ iterations).
\end{itemize} %is also small.

\subsection{Numerical Results of Virtual Beamforming Design}\label{sec:numericalvb}
We generate the instances of the virtual beamforming design problem \eqref{VB}:
we set %$m=n\in\{10,15,20\}$ and
$P_i=1$ for all $i=1,2,\ldots,n$ and generate $\bh_j$ for all $j=1,2,,\ldots,m$ according to the standard complex Gaussian distribution (as {in \cite[Section 4.3.5]{Palomar}}).
In our simulations, we set $m\in\{5,10,15\}$ and $n\in\{5,10,15,20\}$ (and thus there are in total $12$ different pairs of $(m,\,n)$).
For each pair of $(m,\,n)$, we generate {$50$} instances and apply our proposed ECSDR-BB algorithm to solve them.
All results in this subsection are obtained by averaging over the {$50$} instances and the obtained results are summarized in Tables V and VI.

%We assume that the channel vectors $\bh_1,\ldots,\bh_m$ follow the complex Gaussian distribution.
%We test the performance of the ECSDR-BB algorithm on 120 test instances with
%$m\in\{5,10,15\}$, $n\in\{5,10,15,20\}$ and $P=1$. The vectors $\bh_1,\bh_2,\ldots,\bh_m$ are randomly sampled from the $n$-dimensional complex Gaussian distribution $\mathcal{C}\mathcal{N}(0, I_n)$. Similar to the previous two subsections, we list the related results of ECSDR-BB algorithm,
%(CSDR) relaxation and (ECSDR) relaxation in Tables V and VI. Each row of these tables corresponds to the average results
%over ten instances.

\begin{table}%[htb]
\centering
\caption{Objective values and upper bounds of virtual beamforming design problem \eqref{VB}}
%The problem is to maximize the objective function, thus
%the listed bounds are upper bounds.}
\begin{tabular}{c|rrr}
\hline
$(m,n)$  &ObjVal &UBdE &UBdC \\
\hline
(5,\,5)  &108.837 &108.858  &108.858 \\
(10,\,5)  &187.625 &187.647  &187.647 \\
(15,\,5)  &259.929 &260.127  &260.127 \\
(5,\,10)  &364.247 &364.746  &364.746 \\
(10,\,10)  &534.053 &535.807  &535.807 \\
(15,\,10)  &701.893 &703.075  &703.075 \\
(5,\,15)  &696.457 &699.092  &699.092 \\
(10,\,15)  &1030.088 &1037.109  &1037.109 \\
(15,\,15)  &1320.666 &1327.523  &1327.523 \\
(5,\,20)  &1154.460 &1163.583  &1163.583 \\
(10,\,20)  &1619.981 &1633.517  &1633.517 \\
(15,\,20)  &2039.053 &2057.837  &2057.837 \\
\hline
\end{tabular}
\end{table}

\begin{table}%[htb]
\centering
\caption{CPU time (in Seconds) and number of iterations for solving virtual beamforming design problem \eqref{VB}}
\begin{tabular}{c|rr|rr}
\hline
$(m,n)$  &\# Iter &Time  &TimeE &TimeC \\
\hline
(5,\,5)  &1.6 &0.19  &0.06 &0.06 \\
(10,\,5)  &1.5 &0.17  &0.06 &0.06 \\
(15,\,5)  &3.1 &0.45  &0.06 &0.06 \\
(5,\,10)  &11.4 &3.01  &0.06 &0.06 \\
(10,\,10)  &20.6 &5.68  &0.06 &0.06 \\
(15,\,10)  &14.4 &3.89  &0.06 &0.06 \\
(5,\,15)  &94.2 &38.92  &0.06 &0.06 \\
(10,\,15)  &157.1 &58.51  &0.06 &0.06 \\
(15,\,15)  &161.4 &58.25  &0.06 &0.06 \\
(5,\,20)  &601.4 &267.86  &0.06 &0.06 \\
(10,\,20)  &534.8 &237.52  &0.06 &0.06 \\
(15,\,20)  &541.7 &242.19  &0.06 &0.06 \\
\hline
\end{tabular}
\end{table}

%We compare the bounds of the (CSDR) relaxation and (ECSDR) relaxation with the global optimal value in Table V.
{Since the set $\mathcal{A}_i=[0,2\pi]$ for all $i=1,2,\ldots,n$ in problem \eqref{VB}, relaxation (ECSDR) is equivalent to (CSDR), as
discussed at the end of Section II and as demonstrated {and verified} in Table V.
%In particular, both of the two relaxations are tight in all cases of $n=5.$
%Although relaxation (ECSDR) is not tighter than (CSDR) in the special case (where $\mathcal{A}_i=[0,2\pi]$ for all $i=1,2,\ldots,n$) as in problem \eqref{VB},
However, (ECSDR) still plays an important role of globally solving problem \eqref{VB} in the ECSDR-BB
algorithm, where %as the set $\mathcal{A}_i$ is recursively partitioned into smaller subsets,
the quality of (ECSDR) defined over the recursively partitioned subsets of $\mathcal{A}_i$ becomes better and better.}
%generating the lower bounds in the ECSDR-BB
%algorithm (for globally solving problem \eqref{VB}), where as the set $\mathcal{A}_i$ is recursively partitioned into smaller subsets, the quality of relaxation (ECSDR) defined over the subsets will become better and better.}
%relaxation becomes effective for lower bounding.

%. Hence, as demonstrated in Table 5, the bounds of the two relaxations are always
%the same. In all the cases of $n=5$, the (CSDR) relaxation and (ECSDR) relaxation are both very tight, and the gaps
%between these lower bounds and the global optimal value is zero. As $n$ increases to $10$, the two relaxations
%are not tight, and the gaps become nonzero. Although the (ECSDR) relaxation seems not tighter than (CSDR) relaxation
%in the case of $\mathcal{A}_i=[0,2\pi]$, $i=1,2,\ldots,n$, it is still an important lower bounding component in the ECSDR-BB
%algorithm, in which the set $\mathcal{A}_i$ is partitioned to smaller subsets, and then the (ECSDR) relaxation becomes effective for lower bounding.

%The CPU time results listed
{As shown in Table VI, our proposed ECSDR-BB algorithm is quite efficient for solving small-scale problem instances (e.g., with $n\leq 10$).
%Since relaxation (ECSDR) is always tight for the instances with $n=5$, the ECSDR-BB algorithm terminates in
%only one iteration for these instances.
%The ECSDR-BB algorithm can solve problem instances with $n=5,$ $n=10,$ and $n=15$ within $1$ second, $6$ seconds, and $59$ seconds (on average), respectively.
Moreover, the ECSDR-BB algorithm is able to solve all of problem instances within $602$ iterations and within $268$ seconds (on average).
These results show the high efficiency of our proposed algorithm in solving the virtual beamforming design problem \eqref{VB}.
%that most of problem instances can be solved efficiently by our proposed algorithm.~% in practical computing time.
By using our proposed algorithm as the benchmark, we can see that the two relaxations (for the original problem \eqref{VB}) are generally not tight for the virtual beamforming design problem but the relaxation gaps are generally very small. We shall further compare the efficiency of our proposed ECSDR-BB algorithm with the general-purpose global optimization solver for solving the virtual beamforming design problem \eqref{VB} in Section IV-E.}

%
%Moreover,
%, all the 10-dimensional cases can be solved by
%, and all the 15-dimensional cases can be solved within 30 seconds (on average).
%In these cases, the CPU time is practical for real applications. Moreover, for the 20-dimensional instances, ECSDR-BB
%algorithm may need tens of minutes, which might not be efficient enough for all cases, but is still acceptable
%for generating benchmark solutions for testing approximation algorithms.

\subsection{Comparison of ECSDR-BB with SD for MIMO Detection}

In this subsection, we compare our proposed ECSDR-BB algorithm with the state-of-the-art tailored global algorithm called sphere decoder\footnote{The code of the SD algorithm is downloadable from  https://ww2.mathworks.cn/matlabcentral/fileexchange/22890-sphere-decoder-for-mimo-systems. We have made some modifications on the above downloaded code to improve its efficiency by adopting the techniques proposed in \cite{Chan}.}%, which indeed significantly improves the computational efficiency of the original code
~(SD) \cite{Damen} for solving the MIMO detection problem \eqref{MIMO-Problem}. %The SD algorithm is a tailored global algorithm for the MIMO detection problem \cite{Damen}.
%As far as we know, there is no previous global algorithm that is specially designed for solving (CQP), except the case of
%MIMO detection problem. For the MIMO detection problem, the most well-known global algorithm is the sphere decoder (see, e.g., \cite{Damen}).
%In this section, we compare the efficiency of the proposed ECSDR-BB algorithm with the classical sphere decoder algorithm.
{{To compare the two algorithms, we generate problem instances with $M=8$ and $M=4$, and different $(m,\,n)$ and different SNRs.}}
In each setup, we generate {$50$} problem instances and apply the two algorithms to solve them.
{{Numerical results of the average and worst-case CPU time of the 50 instances are summarized in Table VII and Table VIII.}}

%thirty instances are randomly generated, and solved by using the ECSDR-BB algorithm and
%sphere decoder. The improved techniques
%proposed in \cite{Chan} have been applied to speed-up the sphere decoder.

%\begin{table}[htb]
%\centering
%\caption{ECSDR-BB versus sphere decoder (SD) on CPU time for solving MIMO detection problem.}
%\begin{tabular}{cc|rr|rr}
%\hline
%\multicolumn{2}{c|}{Configure}  & \multicolumn{2}{c|}{Average Performance}  & \multicolumn{2}{c}{Worst-Case Performance}  \\
%$(m,n,M)$ &SNR &ECSDR-BB &SD &ECSDR-BB &SD \\
%\hline
%(20,10,8) &25  &0.127 &0.144 &0.510 &0.532  \\
%(20,10,8) &20  &0.199 &0.227 &0.949 &1.107  \\
%(20,10,8) &15  &0.708 &0.244 &1.589 &1.703  \\
%(20,10,8) &10  &1.704 &0.176 &5.231 &0.733  \\
%(20,10,8) &5  &3.625 &0.578 &19.550 &6.074  \\
%\hline
%(30,15,8) &25  &0.120 &1.906 &0.156 &7.790  \\
%(30,15,8) &20  &0.256 &2.138 &1.463 &10.069  \\
%(30,15,8) &15  &1.170 &2.076 &2.218 &17.641  \\
%(30,15,8) &10  &3.572 &1.959 &7.985 &6.654  \\
%(30,15,8) &5  &10.200 &3.839 &40.017 &20.113  \\
%\hline
%(40,20,8) &25  &0.721 &12.651 &5.058 &35.595  \\
%(40,20,8) &20  &0.489 &17.683 &1.365 &106.973  \\
%(40,20,8) &15  &2.424 &41.757 &6.180 &668.101  \\
%(40,20,8) &10  &8.228 &46.777 &30.577 &807.337  \\
%(40,20,8) &5  &23.002 &213.456 &129.798 &4009.244  \\
%\hline
%\end{tabular}
%\end{table}

\begin{table}%[htb]
\centering
\caption{CPU time of ECSDR-BB and SD {{\cite{Chan}}} for solving MIMO detection problem \eqref{MIMO-Problem} {with $M=8$}.}
\begin{tabular}{cc|rr|rr}
\hline
\multicolumn{2}{c|}{Problem Setup}  & \multicolumn{2}{c|}{Average Performance}  & \multicolumn{2}{c}{Worst-Case Performance}  \\
$(m,n,M)$ &SNR &ECSDR-BB &SD &ECSDR-BB &SD \\
\hline
(30,\,20,\,8) &25  &0.756 &0.006 &3.199 &0.155  \\
(30,\,20,\,8) &20  &2.289 &0.004 &4.363 &0.006  \\
(30,\,20,\,8) &15  &4.533 &0.020 &8.818 &0.071  \\
(30,\,20,\,8) &10  &22.699 &0.991 &130.547 &4.527  \\
(30,\,20,\,8) &5  &45.675 &105.955 &142.117 &2004.779  \\
\hline
(24,\,20,\,8) &25  &2.954 &0.004 &6.471 &0.005  \\
(24,\,20,\,8) &20  &4.685 &0.012 &8.385 &0.091  \\
(24,\,20,\,8) &15  &8.313 &0.325 &27.128 &1.781  \\
(24,\,20,\,8) &10  &52.040 &25.201 &183.207 &627.656  \\
(24,\,20,\,8) &5  &70.287 &566.220 &196.828 &10825.587  \\
\hline
(20,\,20,\,8) &25  &5.477 &5.744 &9.313 &246.148  \\
(20,\,20,\,8) &20  &7.136 &17.862 &14.032 &320.032  \\
(20,\,20,\,8) &15  &14.844 &86.448 &43.206 &1922.981  \\
(20,\,20,\,8) &10  &102.886 &281.456 &433.586 &5543.505  \\
(20,\,20,\,8) &5  &139.449 &1836.163 &571.820 &14263.330  \\
\hline
\end{tabular}
\end{table}

\begin{table}%[htb]
\centering
\caption{CPU time of ECSDR-BB and SD {{\cite{Chan}}} for solving MIMO detection problem \eqref{MIMO-Problem} {with $M=4$}.}
\begin{tabular}{cc|rr|rr}
\hline
\multicolumn{2}{c|}{Problem Setup}  & \multicolumn{2}{c|}{Average Performance}  & \multicolumn{2}{c}{Worst-Case Performance}  \\
$(m,n,M)$ &SNR &ECSDR-BB &SD &ECSDR-BB &SD \\
\hline
(24,\,20,\,4) &15  &0.634 &0.003 &0.957  &0.027 \\
(36,\,30,\,4) &15  &2.046 &0.008 &2.819  &0.041 \\
(48,\,40,\,4) &15  &4.912 &0.421 &6.122  &8.151 \\
(60,\,50,\,4) &15  &11.799 &0.304 &14.204  &5.223 \\
\hline
(24,\,20,\,4) &10  &1.085 &0.045 &3.164  &0.129 \\
(36,\,30,\,4) &10  &4.216 &0.922 &12.135  &5.643 \\
(48,\,40,\,4) &10  &14.780 &8.827 &26.824  &46.277 \\
(60,\,50,\,4) &10  &40.575 &106.892 &145.908  &328.158 \\
\hline
(20,\,20,\,4) &15  &0.835 &0.386 &1.149  &1.934 \\
(22,\,22,\,4) &15  &1.193 &0.398 &1.573  &2.204 \\
(24,\,24,\,4) &15  &1.331 &3.948 &1.788  &20.830 \\
(26,\,26,\,4) &15  &1.753 &7.853 &3.200  &38.066 \\
(28,\,28,\,4) &15  &1.958 &18.704 &2.959  &210.623 \\
\hline
(20,\,20,\,4) &10  &1.857 &0.653 &4.509  &2.962 \\
(22,\,22,\,4) &10  &3.322 &3.270 &8.068  &12.719 \\
(24,\,24,\,4) &10  &2.634 &9.061 &5.244  &34.818 \\
(26,\,26,\,4) &10  &6.523 &20.186 &17.873  &188.251 \\
(28,\,28,\,4) &10  &7.872 &34.918 &19.456  &224.493 \\
\hline
\end{tabular}
\end{table}

%From the listed results, we discover that
{{For the case where $M=8$}},
we can observe, from Table VII, that our proposed ECSDR-BB algorithm is not as efficient as SD for problems where $n=20,~m\geq 24,$ and $\textrm{SNR}\geq 10$.
{However, the ECSDR-BB algorithm performs faster than the SD algorithm over the whole range of tested SNRs in the case where $(m,n)=(20,20)$.
Moreover, the ECSDR-BB algorithm becomes much faster than the SD algorithm in the low SNR case where $\textrm{SNR}=5.$}
%It is widely believed that the case where the number of inputs and outputs is equal to each other is the hardest case for the MIMO detection problem.
The reasons behind the above simulation results might be as follows. In the case where $m$ is much larger than
$n$, the matrix $\bH^{\dagger} \bH /m$ tends to be close to the $n\times n$ identity matrix $\bI_n$. %[[[\rev{Please give a reference here.}]]]
In this case,
since $\bH^{\dagger} \bH \approx m \bI_n$, %and $\bH^{\dagger} \approx m \bH^{-1} $, [[[\rev{The inverse here means pseudo-inverse???}]]]
we have
\begin{align}
\frac{1}{2}\left\|\bH\bx-\br\right\|_2^2 \approx  \frac{m}{2}\bx^{\dagger}\bx-\mathrm{Re}(\bx^{\dagger}\bH^{\dagger}\br)+\frac{1}{2}\br^{\dagger}\br,
\end{align}
and thus the global solution of the original problem is very close to $\mathrm{Scale}(\bH^{\dagger}\br,\be_n)$, where $\mathrm{Scale}(\cdot,\cdot)$ is defined in \eqref{scale} and $\be_n$ is the all-one vector of dimension $n$. Based on the above observation, a SD variant in \cite{Chan}, which applies a depth first search and
selects the node according to an increasing distance from $\bH^{\dagger}\br$ at each iteration, achieves a very high efficiency.
However, for the cases with a fixed $n$, as $m$ decreases, %the condition number of the matrix  $\bH^{\dagger} \bH$ becomes large,
$\mathrm{Scale}(\bH^{\dagger}\br,\be_n)$ might not be a good estimator of the global solution of the original problem and hence the performance of the SD variant degrades very quickly, especially in the low SNR cases. % where the SNRs are very low.

{{We can make the same observation on the comparison of the ECSDR-BB and SD algorithms from the results in Table VIII where $M=4$. In particular, the SD algorithm performs better than the ECSDR-BB algorithm in the easy cases, whereas the performance of the SD algorithm sharply degrades when the problems become hard and the ECSDR-BB algorithm performs much better than the SD algorithm in the hard cases.}}% (e.g., $m/n=1.2$ and SNR$=15$)
% (e.g., $m=n\geq 24,$ or $m=60,n=50$ for the case SNR$=10$, and $m=n\geq 26$  for the case SNR$=15$) 
%Especially, when , SD performs much better than ECSDR-BB algorithm on all cases of different $(m,n)$.
%However, for hard cases, e.g., $m=n\geq 24$ or $m=60,n=50$ for the case SNR$=10$, and $m=n\geq 26$  for the case SNR$=15$, 
%the performance of SD algorithm degrade sharply.}}

%However, as the problem becomes harder (i.e., $n$ is large, SNR is low, and/or $m/n$ equals $1$), the performance of
%SD algorithm degrade sharply (e.g., when $m=n\geq 24$ for the case SNR$=10$, when $m=n\geq 26$  for the case SNR$=15$, and
%when $m=60,n=50$ for the case SNR$=10$).}}

{In short summary, our proposed ECSDR-BB algorithm performs very well in the hard cases}, i.e., the number of inputs and outputs is equal or the SNR is low. Numerical results in Table VII {{and Table VIII}} show that our proposed ECSDR-BB algorithm exhibits a very promising performance in solving the MIMO channel detection problem \eqref{MIMO-Problem} in the hard cases. It is worth mentioning that
the SD algorithm/variant is specially designed to solve the MIMO detection problem and it seems (at least to us) not trivial to extend it to
solve more general problems while our proposed ECSDR-BB algorithm is able to solve various problems in the form of
(CQP), with continuous and/or discrete argument constraints.

%Note that the classical sphere decoder algorithm is specially designed to solve the MIMO detection problem, which is extremely
%efficient in the case that $\bH^{\dagger} \bH /m$ is close to an identity matrix. However, the sphere decoder is not
%easy to be extended to more general cases, especially to other types of (CQP) with discrete arguments. In comparison,
%the proposed ECSDR-BB algorithm is a general algorithm that can solve different types of (CQP).

\subsection{Comparison of ECSDR-BB with Baron for Virtual Beamforming Design}\label{sec:numericalbaron}
%[[[\rev{I have added some comments on Baron. How Baron works and how we can use Baron to solve our problem? Please double check and possibly add more.}]]]
In this subsection, to further demonstrate the efficiency of our proposed ECSDR-BB algorithm, we compare it with Baron {{(version 18.8.24)}} \cite{Tawarmalani}, a well-known
general-purpose global optimization solver, by applying them to solve the virtual beamforming design problem\footnote{There is no existing specially designed global algorithm for the virtual beamforming design problem \eqref{VB} that we can compare our algorithm with.} \eqref{VB}. Notice that Baron is also a branch-and-bound algorithm but it is based on linear programming relaxation. To apply Baron to solve problem \eqref{VB}, we
need to first transform the problem into a real quadratic problem by representing the real and imaginary
parts of each complex variable with two independent real variables. %\rev{as follows:
%\begin{align*}
%\min_{\by\in\mathbb{R}^{2n}} ~&~ \by^{\T}\hat \bQ \by \\
%\mbox{s.t.~~} ~&~ y_i^2+y_{n+i}^2\leq {P_i},~i=1,2,\ldots,n,
%\end{align*}
%where \begin{align*}\hat{\bQ}=\begin{bmatrix}
%\textrm{Re}(\bQ) ~&-\textrm{Im}(\bQ)\\[3pt]
%\textrm{Im}(\bQ) ~&\textrm{Re}(\bQ)\\
%\end{bmatrix},~\by=\begin{bmatrix} %\mathcal{T}(x)=
%\textrm{Re}(\bx) \\[3pt]
%\textrm{Im}(\bx)\\
%\end{bmatrix},~\bQ=- \sum_{j=1}^m \bh_j \bh_j^\dagger.
%\end{align*}}
The error tolerances in both the ECSDR-BB algorithm and Baron are set to $10^{-4}$.
In our simulations, we generate $10$ problem instances with $n=5$ and $10$ problem instances with $n=10$ as done in Section \ref{sec:numericalvb}.
%, which implements a linear relaxation
%based branch-and-bound algorithm \cite{Tawarmalani}. We use the instances in virtual beamforming design problem for comparison experiments.
%The error tolerances of both ECSDR-BB algorithm and Baron are set to $10^{-4}$.

{{Table IX}} shows the numerical results for the $10$ problem instances with $n=5.$ We can observe from the table that (ECSDR) is tight %\footnote{\rev{It follows from  \cite[Theorem 2.1]{Pataki} (also see \cite[Theorem 2.1]{lemon2016low}) that, if there are $5$ linear constraints in an SDP, then it has a solution with rank being less than or equal to $2$. This result sheds useful insight into why the SDR is generally tight in Table VIII and the algorithm terminates within a single iteration, albeit it cannot guarantee the tightness of the SDR.}}
in this case and the ECSDR-BB algorithm terminates in only one iteration within {$0.11$} {seconds}; but Baron needs significantly larger number of iterations (i.e., in the order of $1000$--$10000$) and more CPU time {(i.e., from $41$ to $432$ seconds)}.
Moreover, we have further compared the two algorithms on $10$ problem instances with $n=10$.
However, we found that Baron fails to solve most of problem instances with $(m,n)=(10,10)$ within 60 minutes (and thus the results are not listed here).
In contrast, the ECSDR-BB algorithm can successfully solve problem instances with $m\in\{5,10,15\}$ and $n=10$ within {$6$} seconds (on average), as listed in Table  VI.
These results clearly show that the specially designed ECSDR-BB algorithm achieves significantly higher efficiency on globally solving problem (CQP) than the general-purpose global optimization solver such as Baron.
%, we list numerical results for the ten 5-dimensional test instances , where the first column shows the configuration of the test instances, the second column lists the objective values, the number of iterations, and CPU time results of the ECSDR-BB algorithm, the third column lists the objective values, the number of iterations, and CPU time results of Baron.

\begin{table}%[htb]
\centering
\caption{Numerical results of ECSDR-BB and Baron for virtual beamforming design problem \eqref{VB} with $n=5$}
\begin{tabular}{c|rrr|rrr}
\hline
Setup   & \multicolumn{3}{c|}{ECSDR-BB}  & \multicolumn{3}{c}{Baron}  \\
$(m,n)$  & ObjVal &\# Iter  & Time & ObjVal & \# Iter  & Time  \\
\hline
(5,\,5)  &102.97 &1 &0.06 &102.97 &1567 &80.0 \\
(5,\,5)  &122.66 &1 &0.11 &122.66 &981 &41.9 \\
(5,\,5)  &91.40 &1 &0.11 &91.40 &1729 &62.7 \\
(5,\,5)  &120.71 &1 &0.11 &120.71 &967 &60.8 \\
(5,\,5)  &98.62 &1 &0.08 &98.62 &1705 &66.4 \\
\hline
(10,\,5)  &300.10 &1 &0.08 &300.10 &1855 &80.3 \\
(10,\,5)  &141.27 &1 &0.09 &141.27 &4071 &208.7 \\
(10,\,5)  &150.20 &1 &0.08 &150.20 &10009 &431.7 \\
(10,\,5)  &135.02 &1 &0.06 &135.02 &6999 &322.8 \\
(10,\,5)  &166.03 &1 &0.07 &166.03 &2477 &86.2 \\
\hline
\end{tabular}
\end{table}

%[[[\rev{Please double check that 0.66 in Table VIII is not a typo, or it should be 0.06?}]]]

\section{Conclusions}\label{sec:conclusion}

In this paper, we considered a class of nonconvex complex quadratic programming problems (i.e., problem (CQP)), which finds many important signal processing applications.
We first derived a new enhanced relaxation (ECSDR) (compared to the conventional relaxation (CSDR)) for problem (CQP) based on the polar coordinate representations of the complex variables. %The enhanced relaxation is obtained by carefully exploiting the special structure of the nonconvex constraints in problem (CQP)
Then we proposed a branch-and-bound global algorithm, ECSDR-BB, for solving problem (CQP) based on the newly derived relaxation. To the best of our knowledge, our proposed ECSDR-BB algorithm is the first tailored algorithm for problem (CQP) which is guaranteed to find the global solution of the problem (within any given error tolerance). We applied our proposed ECSDR-BB algorithm for solving the MIMO detection problem, the unimodular radar code design problem, and the virtual beamforming design problem, and
our simulation results show the high effectiveness of our proposed enhanced relaxation (ECSDR) and the high efficiency of our proposed ECSDR-BB algorithm.
In particular, our proposed ECSDR-BB algorithm performs significantly better than the state-of-the-art SD algorithm for solving the MIMO detection problem in the hard cases (where the number of inputs and outputs is equal or the SNR is low) and the state-of-the-art general-purpose global solver Baron for solving the virtual beamforming design problem.

\appendices

\section{Proof of Proposition~\ref{lemma-width}}\label{applemmawidth}
%\begin{proof}
\textbf{Proof of \eqref{eq10}}: By the definition of $\mathcal{G}_{\mathcal{A}_i}$ (cf. \eqref{GAi1}), we only need to show \begin{equation}\label{rightterm}\left|x_i\right|\geq r_i\cos\left(\frac{\bar{\theta}_i - \underline{\theta}_i}{2}\right).\end{equation}
%we have $r_i\geq |x_i|$
%for all $(x_i,r_i)\in\mathcal{G}_{\mathcal{A}_i}$.

Let us first consider the special case where $\mathcal{A}_{i} = [\underline{\theta}_i,\bar{\theta}_i]$. Without loss of generality, let us assume
$r_i>0$ in \eqref{rightterm}. %and $\bar{\theta}_i - \underline{\theta}_i<\pi$
(Otherwise, if $r_i=0$, then $|x_i|\leq r_i=0$ holds and thus \eqref{eq10} holds.) %; or if $\bar{\theta}_i - \underline{\theta}_i=\pi$, then \eqref{eq10} also holds.)
Since $(x_i,r_i)\in\mathcal{G}_{\mathcal{A}_i}$, we have
\begin{equation}\label{eq14}
\alpha_i \mathrm{Re}(x_i)+\beta_i \mathrm{Im}(x_i)\geq \gamma_i r_i,
\end{equation}
where $\alpha_i, \beta_i, \text{and}~\gamma_i$ are given in \eqref{alpha}, %and they satisfy
%\begin{equation}\label{eq15}
%a_i^2+b_i^2=1,
%\end{equation}
which further implies
%Moreover, by the Cauchy-Schwarz inequality, we have
\begin{equation}\label{eq16}
\alpha_i \mathrm{Re}(x_i)+\beta_i \mathrm{Im}(x_i)\leq \sqrt{\left[\alpha_i^2+\beta_i^2\right]\left[\mathrm{Re}^2(x_i)+\mathrm{Im}^2(x_i)\right]}=\left|x_i\right|.
\end{equation} Combining \eqref{eq14} and \eqref{eq16} with the definition of $\gamma_i$ shows that \eqref{rightterm} holds for all $(x_i,r_i)\in\mathcal{G}_{[\underline{\theta}_i,\bar{\theta}_i]}$.
%Using equations \eqref{eq14} and \eqref{eq16}, we can derive that
%$|x_i|\geq r_i\cos\frac{\bar{\theta}_i - \underline{\theta}_i}{2}$ for all $(x_i,r_i)\in\mathcal{G}_{[\underline{\theta}_i,\bar{\theta}_i]}$.

Now let us consider the general case where $\mathcal{A}_{i}$ is any set satisfying $\mathcal{A}_{i} \subseteq [\underline{\theta}_i,\bar{\theta}_i]$. Since the convex envelope of $\mathcal{A}_{i}$ must also be a subset of that of $[\underline{\theta}_i,\bar{\theta}_i],$ we have $\mathcal{G}_{\mathcal{A}_i} \subseteq \mathcal{G}_{ [\underline{\theta}_i,\bar{\theta}_i]}.$ From this and the result for the case where $\mathcal{A}_{i} = [\underline{\theta}_i,\bar{\theta}_i],$ we can conclude that \eqref{rightterm} holds for all $(x_i,r_i)\in\mathcal{G}_{\mathcal{A}_i}$.
%\end{proof}

%\subsection{Proof of Lemma 2}
%\begin{proof}
\textbf{Proof of \eqref{eq11}}: By the definition of $\mathcal{F}_{\mathcal{B}_i}$ (cf. \eqref{FBi}), we only need to show the second inequality in \eqref{eq11}.
% that $X_{ii}-r_i^2\leq \frac{\left(u_i-\ell_i \right)^2}{4}$.
Since $X_{ii} -(\ell_i+u_i)r_i+\ell_i u_i\leq 0$ holds for all $(X_{ii},r_i)\in\mathcal{F}_{\mathcal{B}_i}$,
it follows
\begin{align*}
X_{ii}-r_i^2 &\leq (\ell_i+u_i)r_i-\ell_i u_i -r_i^2 \\
&= \frac{\left(u_i-\ell_i \right)^2}{4} -\left(r_i- \frac{\ell_i+u_i}{2}\right)^2\\
& \leq \frac{\left(u_i-\ell_i \right)^2}{4}.
\end{align*}
%\end{proof}

\section{Proof of Theorem 2} \label{appthmepssolution}
%\begin{proof}
At the beginning of the $k$-th iteration of the ECSDR-BB algorithm, the initial feasible set $\mathcal{D}^0$ has been (recursively) partitioned into $k$ smaller subsets.
Since the global solution of problem (CQP) must be lie in one of the subsets and $L^k$ is the smallest lower bound of all subproblems in the active node set $\mathcal{P}$, we have $L^k\leq \nu^*$. Since $\hat{\bx}^k=\mathrm{Scale}\left(\bx^k,\br^k\right)$ (cf. \eqref{scale}) is feasible to problem (CQP), we immediately get $\nu^* \leq F\left(\hat{\bx}^k\right)$. Combining the above two inequalities yields \eqref{lowerupper}.

Next, we prove that the returned solution $\bx^*$ by the ECSDR-BB algorithm is an $\epsilon$-optimal solution. It follows from \eqref{rgap} and \eqref{lowerupper} that $F(\hat{\bx}^k) \leq L^k+ \epsilon \leq \nu^*+ \epsilon.$
By the update rule of the upper bound (cf. {Line \ref{line:upper1}} of the ECSDR-BB algorithm), $U^*$ must satisfy $U^*\leq F(\hat{\bx}^k)$. Hence,
for the returned solution $\bx^*$, there holds
$F(\bx^*)=U^* \leq \nu^*+ \epsilon,$ which, together with Definition \ref{esolution}, shows that $\bx^*$ is an $\epsilon$-optimal solution of problem (CQP).  %[[[\rev{There are ??? in the above!!!}]]]
%Then, the termination rule in Line 9 is satisfied,

%\end{proof}

\section{Proof of Lemma \ref{lemmabound}} \label{applemmabound}
%\begin{proof}
By the definition of $F(\bx)$ in problem (CQP), we have
\begin{align*}
&~F\left(\hat{\bx}^k\right)-L^k\\
=&~F\left(\hat{\bx}^k\right)-\frac{1}{2}\bQ\bullet \bX^k-\mathrm{Re}\left(\bc^\dag \bx^k\right) \\
 \leq &~\left|F\left(\hat{\bx}^k\right)-F\left(\bx^k\right)\right|+\left|F\left(\bx^k\right)-\frac{1}{2}\bQ\bullet \bX^k-\mathrm{Re}\left(\bc^\dag \bx^k\right)\right| \\
=&~\left|F\left(\hat{\bx}^k\right)-F\left(\bx^k\right)\right|+\left|\frac{1}{2}\bQ\bullet\left(\bX^k-\bx^k \left(\bx^k\right)^\dag\right)\right|.
\end{align*}
Next, we bound the two terms $\left|F\left(\hat{\bx}^k\right)-F\left(\bx^k\right)\right|$ and $\left|\bQ\bullet\left(\bX^k-\bx^k \left(\bx^k\right)^\dag\right)\right|$ from the above one by one.

We first bound the term $\left|F\left(\hat{\bx}^k\right)-F\left(\bx^k\right)\right|$. %Note that since
%$$\left|F(\bx)-F(\bx')\right|\leq M_F \left\|\bx-\bx'\right\|,~~ \forall \bx,\bx'\in \mathcal{H},$$
%let $(\bx,\bx')=(\bx^k,\hat{\bx}^k)$,
It follows directly from \eqref{eq21} that
\begin{equation}\label{eq21b}
\left|F(\bx^k)-F(\hat{\bx}^k)\right|\leq M_F \left\|\bx^k-\hat{\bx}^k \right\|_2.
\end{equation}
%Besides, %under the condition that $\mathcal{A}_{i_1^*}=[\underline{\theta}_{i_1^*},\bar{\theta}_{i_1^*}]$ is an interval with
%$\bar{\theta}_{i_1^*}-\underline{\theta}_{i_1^*}\leq \pi$, we have
%\begin{equation}
%\hat{\bx}=\textrm{Scale}(\bar{\bx},\bar{\br})=[\bar{r}_1 e^{\textbf{i}\bar{\theta}_1},\ldots,\bar{r}_n e^{\textbf{i}\bar{\theta}_n}]^{\T}
%\end{equation}
%with $\bar{\theta}_{i_1^*} =\arg \left(x_{i_1^*}\right)$.
%Then,
By the definition of $S_1^*$ (cf. \eqref{eq18}), we immediately get
\begin{equation}\label{eq22}
\left\|\bx^k-\hat{\bx}^k \right\|_2%=\sqrt{\sum_{i=1}^{n}(\bar{x}_i-\hat{x})^2}
%=\sqrt{\sum_{i=1}^{n}\left(x^k_i-\hat{x}^k_i\right)^2}
\leq \sqrt{n}S_1^*.
\end{equation}
Combining \eqref{eq21b} and \eqref{eq22} gives
\begin{equation}\label{eq23}
\left|F(\bx^k)-F(\hat{\bx}^k)\right|\leq \sqrt{n}M_F S_1^*.
\end{equation}
%Let $M_a=\sqrt{n}M_F$, then $|F(\bar{x})-F(\hat{x})|\leq M_a (\bar{r}_{i_1^*}-|\bar{x}_{i_1^*}|)$.

Now, we bound the term $\left|\bQ\bullet \left(\bX^k-\bx^k \left(\bx^k\right)^\dag\right)\right|.$
%By using the Neumann inequality, we get
Clearly, %we have $$\left|\bQ\cdot \left(\bX^k-\bx^k \left(\bx^k\right)^\dag\right)\right| \leq $$
there holds
\begin{equation}\label{eq24}\begin{array}{cl}
&\left|\bQ\bullet \left(\bX^k-\bx^k \left(\bx^k\right)^\dag\right)\right| \\ \leq &\left\|\bQ\right\|_{\mathrm{F}} \left\|\left(\bX^k-\bx^k \left(\bx^k\right)^\dag\right)\right\|_{\mathrm{F}}.
\end{array}\end{equation}
%and
%\begin{equation}\label{eq25}
%\left\|\left(\bX^k-\bx^k \left(\bx^k\right)^\dag\right)\right\|=\sqrt{\sum_{i=1}^n \mu_i^2}\leq \sqrt{n} \mu_n,
%\end{equation}
%where $\mu_1,\mu_2, \ldots ,\mu_n$ are eigenvalues
%of $\left(\bX^k-\bx^k \left(\bx^k\right)^\dag\right)$, and are sorted to $0\leq\mu_1\leq \mu_2 \leq \ldots \leq \mu_n$. %, with $0\leq\mu_1\leq \mu_2 \leq ... \leq \mu_n$.
Let $\lambda_{\max}\geq 0$ be the largest eigenvalue of the positive semidefinite matrix $\bX^k-\bx^k \left(\bx^k\right)^\dag.$ Then, we have
\begin{equation}\begin{array}{cl}
&\left\|\bX^k-\bx^k \left(\bx^k\right)^\dag\right\|_{\mathrm{F}} \\ \leq&~\sqrt{n} \lambda_{\max} \leq~\sqrt{n} \textrm{Trace} \left(\bX^k-\bx^k \left(\bx^k\right)^\dag\right).\label{csinequality}
%=&\sqrt{n} \sum_{i=1}^n \left(X^k_{ii}-\left|x^k_i\right|^2\right)\\
%=&\sqrt{n} \sum_{i=1}^n \left[\left(X^k_{ii}-\left(r^k_i\right)^2\right)+\left(r^k_i+\left|x^k_i\right|\right)\left(r^k_i-\left|x^k_i\right|\right)\right]\\
%\leq &n^{\frac{3}{2}} \left(X^k_{i_2^{*}i_2^{*}}-\left(r^k_{i_2^{*}}\right)^2\right)+  2  n^{\frac{3}{2}}u_{\max}\left(r^k_{i_1^*}-\left|x^k_{i_1^*}\right|\right)\\
%=&n^{\frac{3}{2}} S_2^{*}+  2  n^{\frac{3}{2}}u_{\max} S_1^{*},
\end{array}\end{equation} By the definitions of $S_1^{*}$ and $S_2^{*}$ (cf. \eqref{eq18}), we have
\begin{align*}
%&\left\|\bX^k-\bx^k \left(\bx^k\right)^\dag\right\| \leq \sqrt{n} \lambda_{\max} \leq \sqrt{n}
&~\textrm{Trace} \left(\bX^k-\bx^k \left(\bx^k\right)^\dag\right)\nonumber\\
=&~\sum_{i=1}^n \left(X^k_{ii}-\left|x^k_i\right|^2\right)\nonumber\\
=&~\sum_{i=1}^n \left[\left(X^k_{ii}-\left(r^k_i\right)^2\right)+\left(r^k_i+\left|x^k_i\right|\right)\left(r^k_i-\left|x^k_i\right|\right)\right]\label{eqtrace}\\
\leq &~n \left[\left(X^k_{i_2^{*}i_2^{*}}-\left(r^k_{i_2^{*}}\right)^2\right)+  2  u_{{\max}}\left(r^k_{i_1^*}-\left|x^k_{i_1^*}\right|\right)\right]\nonumber\\
=&~n\left( S_2^{*}+  2 u_{\max} S_1^{*}\right),\nonumber
\end{align*} which, together with \eqref{eq24} and \eqref{csinequality}, further implies
\begin{equation}\label{eqsecondterm}
\left|\bQ\bullet \left(\bX^k-\bx^k \left(\bx^k\right)^\dag\right)\right|\leq \|\bQ\|_{\mathrm{F}} n^{\frac{3}{2}} \left( S_2^{*}+  2 u_{\max} S_1^{*}\right).
\end{equation} From \eqref{eq23}, \eqref{eqsecondterm}, and the definitions of $M_1$ and $M_2$ (cf. \eqref{eq27} and \eqref{eq27c}), we immediately get the desired inequality in \eqref{eq20}.
%
%Finally, let $M_1$ and $M_2$ be the constants defined by \eqref{eq27} and \eqref{eq27c},
%equation %\eqref{eq20}
%$$\left|F(\hat{\bx}^k)-L^k\right|\leq  M_1 S_1^*+M_2 S_2^*$$ is derived.
%\end{proof}

\section{Proof of Lemma \ref{lemmaterminate}}\label{applemmaterminate}
%\begin{proof}
It follows from Theorem \ref{thm-epssolution} that, to prove the lemma we only need to prove that \eqref{rgap} holds under (C1), (C2), or (C3).

If condition (C1) holds, then it follows from \eqref{eq20} that \begin{equation}\label{eq29}
F(\hat{\bx}^k)-L^k\leq \left(M_1+M_2\right) S^*_1.
\end{equation}
In this case, we have $\hat{x}^k_{i_1^*}=r^k_{i_1^*} e^{\textsf{i} \arg (x^k_{i_1^*})}$ (cf. \eqref{scale})
and thus \begin{equation}\label{abserr}\left|\hat{x}^k_{i_1^*}-x^k_{i_1^*}\right|=r^k_{i_1^*}-\left|x^k_{i_1^*}\right|.\end{equation} Then, we have
\begin{align}
  S^*_1 =  \left|\hat{x}^k_{i_1^*}-x^k_{i_1^*}\right|        = &~r^k_{i_1^*}-\left|x^k_{i_1^*}\right|\nonumber\\
        \leq &~{r}_{i_1^*}^k \left[1-\cos \left(\frac{ \bar{\theta}^k_{i_1^*}- \underline{\theta}^k_{i_1^*} }{2}\right)\right]\label{S1}\\
        \leq &~\frac{u_{\max}\left(\bar{\theta}^k_{i_1^*}-\underline{\theta}^k_{i_1^*}\right)^2}{8}\leq  \frac{u_{\max}\kappa_1^2}{8},\nonumber
        %\leq &~u_{\max} \left(1-\cos \frac{ \bar{\theta}^k_{i_1^*}- \underline{\theta}^k_{i_1^*} }{2}\right),
\end{align} where the first equality is due to the definition of $S_{1}^*$ (cf. \eqref{eq18}),
the second equality comes from \eqref{abserr}, the first inequality is due to \eqref{eq10}, and the second inequality is a result of the definition of $u_{\max}$ (cf. \eqref{umax}) and the
inequality $1-\cos\left(\theta\right) \leq \frac{\theta^2}{2}$ for all $\theta\in\mathbb{R},$ and the last inequality follows from condition (C1).
Combining \eqref{eq29}, \eqref{S1}, and the definition of $\kappa_1$ (cf. \eqref{eq32}) yields the desired result in \eqref{rgap}.

%Based on Lemma 1, we have
%\begin{equation}\label{eq30}
%S^*_1\leq \bar{r}_{i_1^*} \left(1-\cos \frac{ \bar{\theta}^k_{i_1^*}- \underline{\theta}^k_{i_1^*} }{2}\right) \leq u_{\max} \left(1-\cos \frac{ \bar{\theta}^k_{i_1^*}- \underline{\theta}^k_{i_1^*} }{2}\right).
%\end{equation}
%
%
%%Since $L^k=\frac{1}{2}\bQ\cdot \bX^k+\mathrm{Re}(\bc^\dag \bx^k)$,
%Based on Lemma 3, we have
%\begin{equation}\label{eq28}
%F(\hat{\bx}^k)-L^k \leq  M_1 S_1^*+M_2 S_2^*.
%\end{equation}
%If condition (C1) holds, then
%
%
%Also note that the inequality $1-\cos \phi \le \frac{\phi^2}{2}$ holds for any $\phi \in \mathbb{R}$, and apply this
%inequality to the right side of \eqref{eq30}, we derive
%\begin{equation}\label{eq31}
%S^*_1 \leq  \frac{u_{\max}\left(\bar{\theta}^k_{i_1^*}-\underline{\theta}^k_{i_1^*}\right)^2}{8}.
%\end{equation}
%Then, based on equations \eqref{eq29} and \eqref{eq31}, and the definition of $\delta_1$ in \eqref{eq32}, we can derive
%that $F\left(\hat{\bx}^k\right)-L^k\leq \epsilon$ if condition (C1) holds.

If condition (C2) holds, then we can show $S_1^*=S_2^*=0.$ By this and \eqref{eq20}, we obtain
$F(\hat{\bx}^k)-L^k \leq 0,$ which further implies \eqref{rgap}.

If condition (C3) holds, then it follows from \eqref{eq20} that
\begin{equation}\label{eq33}
F\left(\hat{\bx}^k\right)-L^k \leq  \left(M_1+M_2\right)S_2^*.
\end{equation}
Moreover, from \eqref{eq11} and the definition of $S_2^*$ (cf. \eqref{eq18}), we obtain
%\begin{equation}\label{eq34}
$S^*_2 = X_{i_2^*i_2^*} - r_{i_2^*}^2\leq {\left(u^k_{i_2^*}-\ell^k_{i_2^*}\right)^2}/{4}.$
%\end{equation}
This, together with \eqref{eq33}, (C3), and the definition of $\kappa_2$ (cf. \eqref{eq35}), shows the desired result in \eqref{rgap}.
%\end{proof}

\section{Proof of Theorem \ref{thm-iteration}}\label{appthmiteration}
%\begin{proof}
%For any given index $i\in\{1,2,\ldots,n\}$,
We consider two sets $\mathcal{A}_i$ and $\mathcal{B}_i$ separately. Moreover, when we consider set $\mathcal{A}_i,$ we consider two cases where $\mathcal{A}_i$ is an interval and a discrete set separately.

We first consider the case where $\mathcal{A}_i$ is an interval.
We show that the set $\mathcal{A}_i$ will be partitioned into at most $\mu(\mathcal{A}_i)$ of subsets before the algorithm terminates, where $\mu(\mathcal{A}_i)$ is defined  in \eqref{muA}. According to the algorithm, suppose that %the selected index $i^*_1$ is $i$ and
$S_1^*\geq S_2^*$ at the $k$-th iteration,
%For the , if the selected index  and  %and $\mathcal{A}^k_{i_1^*}$ is an interval set,
then the interval $\mathcal{A}^k_{i_1^*}$
will be partitioned into two subsets with the same length. If the ECSDR-BB algorithm does not terminate in Line \ref{line:terminate} at the $k$-th iteration, then it follows from condition (C1) in Lemma \ref{lemmaterminate} that $\omega(\mathcal{A}^k_{i_1^*})>\min\left\{\kappa_1,\pi\right\}$ and the length of each subset obtained after the partition
is larger than $\frac{1}{2} \min\{\kappa_1,\pi\}$. {Hence, if set $\mathcal{A}_i$ has been partitioned into $\mu(\mathcal{A}_i)$ of subsets, the total length of all obtained subsets is strictly greater than  $$\mu(\mathcal{A}_i)\frac{1}{2} \min\{\kappa_1,\pi\}\geq \omega(\mathcal{A}_i),$$ where the inequality is due to the definition of $\mu(\mathcal{A}_i)$ (cf. \eqref{muA}). This is a contradiction.
Therefore, if $\mathcal{A}_i$ is an interval,
it can be partitioned at most $\mu(\mathcal{A}_i)$
%$$\max\left\{\left\lceil \frac{2\omega(\mathcal{A}_i)}{\min\{\delta_1,\pi\}}\right\rceil,1\right\}$$
times before the algorithm terminates.}

Now, we consider the case where $\mathcal{A}_i$ is a discrete set (with a finite number of elements). We can use the similar argument as in the above case to show that $\mathcal{A}_i$ can be partitioned at most $\mu(\mathcal{A}_i)=\left|\mathcal{A}_i\right|$ times. The only difference here is that $\mathcal{A}_i$ is a discrete set.
More specifically, according to the algorithm, suppose that %the selected index $i^*_1$ is $i$ and
$S_1^*\geq S_2^*$ at the $k$-th iteration,
%For the , if the selected index  and  %and $\mathcal{A}^k_{i_1^*}$ is an interval set,
then the interval $\mathcal{A}^k_{i_1^*}$
will be partitioned into two nonempty and nonoverlapping subsets.
%with a balanced cardinality\footnote{Two sets with a balanced cardinality here means that the difference of the cardinality of the two sets is at most one.}.
If the ECSDR-BB algorithm does not terminate in Line \ref{line:terminate} at the $k$-th iteration, then it follows from condition (C2) in Lemma \ref{lemmaterminate} that $\mathcal{A}^k_{i_1^*}$ is not a singleton and each subset obtained after the partition
is not empty. Hence, if $\mathcal{A}_i$ is a discrete set, it can be partitioned at most $\left|\mathcal{A}_i\right|$
%$$\max\left\{\left\lceil \frac{2\omega(\mathcal{A}_i)}{\min\{\delta_1,\pi\}}\right\rceil,1\right\}$$
times before the algorithm terminates.

%Based on Condition (C2) in Lemma 4, we can show that  where $\left|\mathcal{A}_i\right|$ denotes the cardinality of the set $\mathcal{A}_i$.

Finally, we consider set $\mathcal{B}_i$. This case is essentially the same as the case where $\mathcal{A}_i$ is an interval. Using the same argument, we can show that the set $\mathcal{B}_i$ can be partitioned at most
$\max\left\{\left\lceil \frac{2\omega(\mathcal{B}_i)}{\kappa_2}\right\rceil,1\right\}$ times before the algorithm terminates.
%For the $k$-th iteration, if the selected index $i^*_2$ equals $i$ and $S_1^*< S_2^*$, then the interval $\mathcal{B}_{i_2^*}=\left[\ell^k_{i_2^*},u^k_{i_2^*}\right]$ will be
%partitioned to two sub-intervals, and the length of each sub-interval is larger than $\frac{\delta_2}{2}$.
%Thus, the set $\mathcal{B}_i$ can be partitioned to at most
%$$\max\left\{\left\lceil \frac{2\omega(\mathcal{B}_i)}{\delta_2}\right\rceil,1\right\}$$ sub-intervals
%before the algorithm terminates.

From the above analysis, we can conclude that the proposed algorithm must terminate within at most $K_\epsilon$ iterations, where $K_\epsilon$ is defined in \eqref{eq36}.
%$$
%\prod_{i=1}^n  N(\mathcal{A}_i) \cdot \max\left\{\left\lceil \frac{2\omega(\mathcal{B}_i)}{\delta_2}\right\rceil,1\right\}
%$$
%iterations.

%\end{proof}

%[[[\rev{I believe that the proof is correct. However, we have not made it very clear. For instance, why there is a $2$ in the numerators of \eqref{eq36} and \eqref{muA}? Let us further work on this! }]]]

\ifCLASSOPTIONcaptionsoff
  \newpage
\fi

%[[[\rev{References need to be carefully updated!}]]]

%[[[\rev{Add the conference version of the paper!}]]]

%\newpage

\bibliographystyle{IEEEtran}
%\bibliography{PC-BB-11_22}
%\bibliography{ECSDR-BB-Ref-1124}
%\bibliography{ECSDR-BB-Ref-20190904}
%\bibliography{ECSDR-BB-Ref-20200524}
\bibliography{ECSDR-BB-Ref-20200525}

% biography section
%
% If you have an EPS/PDF photo (graphicx package needed) extra braces are
% needed around the contents of the optional argument to biography to prevent
% the LaTeX parser from getting confused when it sees the complicated
% \includegraphics command within an optional argument. (You could create
% your own custom macro containing the \includegraphics command to make things
% simpler here.)
%\begin{IEEEbiography}[{\includegraphics[width=1in,height=1.25in,clip,keepaspectratio]{mshell}}]{Michael Shell}
% or if you just want to reserve a space for a photo:

%\begin{IEEEbiography}{Michael Shell}
%Biography text here.
%\end{IEEEbiography}

% insert where needed to balance the two columns on the last page with
% biographies
%\newpage

%\begin{IEEEbiographynophoto}{Cheng Lu}
%Biography text here.
%\end{IEEEbiographynophoto}

% You can push biographies down or up by placing
% a \vfill before or after them. The appropriate
% use of \vfill depends on what kind of text is
% on the last page and whether or not the columns
% are being equalized.

%\vfill

% Can be used to pull up biographies so that the bottom of the last one
% is flush with the other column.
%\enlargethispage{-5in}

% that's all folks
\end{document}